\documentclass[11pt,a4paper]{article}
\usepackage{amsthm,fullpage,refcount}
\usepackage[dvipsnames]{xcolor}
\usepackage{tikz}
\usetikzlibrary {decorations.pathmorphing}
\usepackage{amsfonts}

\newcommand{\rank}{\ensuremath{\mathrm{rank}}}
\newcommand{\select}{\ensuremath{\mathrm{select}}}

\newtheorem{theorem}{Theorem}
\newtheorem{lemma}[theorem]{Lemma}
\newtheorem{corollary}[theorem]{Corollary}
\newtheorem{definition}{Definition}
\newtheorem{example}{Example}
\newcommand{\dd}{\mathinner{.\,.}}

\begin{document}

\title{Incongruity-sensitive access to highly compressed strings}

\author{Ferdinando Cicalese, Zsuzsanna Lipt\'ak, Travis Gagie,\\
Gonzalo Navarro, Nicola Prezza and Cristian Urbina}

\maketitle

\begin{abstract}
\noindent
Random access to highly compressed strings --- represented by straight-line programs or Lempel-Ziv parses, for example --- is a well-studied topic.  Random access to such strings in strongly sublogarithmic time is impossible in the worst case, but previous authors have shown how to support faster access to specific characters and their neighbourhoods.  In this paper we explore whether, since better compression can impede access, we can support faster access to relatively incompressible substrings of highly compressed strings.  We first show how, given a run-length compressed straight-line program (RLSLP) of size $g_{rl}$ or a block tree of size $L$, we can build an $O (g_{rl})$-space or an $O (L)$-space data structure, respectively, that supports access to any character in time logarithmic in the length of the longest repeated substring containing that character.  That is, the more incongruous a character is with respect to the characters around it in a certain sense, the faster we can support access to it.  We then prove a similar but more powerful and sophisticated result for parsings in which phrases' sources do not overlap much larger phrases, with the query time depending also on the number of phrases we must copy from their sources to obtain the queried character.
\end{abstract}

\section{Introduction}
\label{sec:introduction}

There has been a lot of work~\cite{BFHMP24,BCGGKNOPT21,BCPT15,BLRSSW15,CLLPPSS05,CU24,GNP20,GanardiContracting,KK20,KP18,KS22,KNP22,LMN24,Ryt03,shibata2025lzse,VY13} (see~\cite{KK25} for a recent discussion) on supporting fast worst-case random access to highly compressed strings.  For the purposes of this paper, ``highly compressed'' means stored in space that depends polynomially on a strong measure of compressibility --- the size of the smallest straight-line program, the substring complexity, or the size of the smallest string attractor, for example --- and only polylogarithmically on the length of the string.  We now know, for example, that we cannot support strongly worst-case sublogarithmic-time access to highly compressed strings~\cite{VY13}, but also how to store a block tree~\cite{BCGGKNOPT21,KNP22} for a string $S [1\dd n]$ over an alphabet of size $\sigma$ in optimal $O \left( \delta \log \frac{n \log \sigma}{\delta \log n} \right) \subseteq O (\delta \log n)$ space, where $\delta$ is the substring-complexity~\cite{KNP22}, and support access in worst-case $O (\log (n / \delta))$ time.  Other researchers have shown how to bookmark~\cite{GGKNP14,CGO16} a character or put a dynamic finger~\cite{BCCG18} on it, such that queries regarding neighbouring characters can be answered in time logarithmic in their distance from the bookmark or finger.

Worst-case bounds are only concerned with the characters hardest to reach, however, and bookmarks and fingers can be applied to arbitrary characters.  It is natural to ask whether the standard representations of highly compressed strings --- straight-line programs or LZ77 parses, for example --- treat all characters more or less equally or automatically make some characters easier to reach.  In fact, over twenty years ago Gasieniec et al.~\cite{GP03,GKPS05} showed how we can easily access characters at the ends of non-terminals' expansions in a straight-line program, and Claude and Navarro~\cite{CN12} used this in their grammar-based indexes.  More generally, though, if better compression tends to slow down random access, then perhaps it is easier to access character in less compressible areas of a compressed string.  Those areas are typically the most interesting ones when analyzing repetitive collections, for example they correspond to mutations or rare variants in genomes. In this paper we take a step toward demonstrating this by showing how some standard representations of highly compressible strings can be augmented --- without asymptotically increasing their space usage --- to support fast access to characters that are incongruous with respect to their neighbours, in the sense that the longest repeated substrings containing those characters are short.

In Section~\ref{sec:prelims_RLSLPs} we review notation and definitions we need for working with straight-line programs (SLPs) and run-length SLPs (RLSLPs).  In Section~\ref{sec:grammars} we first show how, given an RLSLP of size $g_{rl}$ for $S$ that is balanced in a certain sense, we can build an $O (g_{rl})$-space data structure that supports access to any character $S [q]$ in time $O (\log \ell_q)$, where $\ell_q$ is the length of the longest repeated substring containing $S [q]$.  It follows that we can store $S$ in $O(\delta\log\frac{n\log\sigma}{\delta\log n})$ space and support $O (\log \ell_q)$-time access.  We then extend our result to block trees and arbitrary RLSLPs.  In Section~\ref{sec:prelims_parses} we review notation and definitions we need for working with parses.  We then prove a similar but more powerful and sophisticated result for parsings: Given a parsing with $t$ phrases in which any phrase's source overlaps only phrases at most $w^{O (1)}$ times larger, where $w$ is the machine word size, we can store $S$ in $O (t)$ space and access $S [q]$ in $O (h_q + \log_w \ell_q)$ time, where the height $h_q$ of $S [q]$ is the number of phrases we must copy from their sources to obtain $S [q]$.  We conclude by showing that we can convert any bidirectional parse with $b$ phrases into one with this constraint on source-phrase overlaps, at the cost of an $O (\log_w (n / b))$-factor increase in size but with no character's height increasing.  It follows that, given a bidirectional parsing with $b$ phrases, we can build an $O (b \log_w (n / b))$-space data structure supporting $O (h_q + \log_w \ell_q)$-time access.

\section{Preliminaries for RLSLPs}
\label{sec:prelims_RLSLPs}

We denote by $[i,j]$ the integer interval $\{i, i+1, \dots, j\}$, and $[i,j) = [1,j-1]$.
Let $\Sigma=\{a_1,\dots,a_\sigma\}$ be a finite set of characters called the \emph{alphabet}. A \emph{string}  is a sequence $S[1\dd n]=S[1]\cdots S[n]$ of length $|S|=n$, where $S[i]\in \Sigma$ for all $i \in [1,n]$. The unique string of length 0, called the \emph{empty string}, is denoted $\epsilon$. A \emph{substring} of $S[1\dd n]$ is any subsequence $S[i\dd j]= S[i]\cdots S[j]$, where $i,j \in [1,n]$, and we let $S[i\dd j] = \varepsilon$ when $i > j$. If $i = 1$ then $S[i\dd j]$ is called a \emph{prefix}, and analogously, if $j = n$ then $S[i\dd j]$ is called a \emph{suffix}.
If $S[1\dd n]$ and $T[1\dd m]$ are strings, their concatenation is the string $S\cdot T[1\dd n+m]=S[1]\cdots S[n]T[1]\cdots T[m]$. We denote $S^k$ the concatenation of $k$ copies of the string $S$.  The \emph{substring complexity} of $S$, usually written $\delta$, is the minimum over $k$ of $1 / k$ times the number of distinct $k$-tuples in $S$.  To simplify notation, by $\log$ we will mean $\log_2$, and when writing $\log_dx$ we will actually mean $\max(1,\log_dx)$.  We use the transdichotomous word RAM model of computation with words of size $w\ge\log n$, where $n$ is the length of the uncompressed string.

A {\em straight-line program (SLP)} is a context-free grammar in Chomsky normal form (i.e., rules are of the form $A \to a$ or $A \to BC$, for nonterminals $A$, $B$, $C$ and terminal $a$) that generates exactly one string, $S[1\dd n]$. There is exactly one rule per nonterminal. We call $\exp(A)$ the string nonterminal $A$ expands to, with $\exp(a)=a$ for terminals $a$.

A generalization of SLPs called \emph{run-length SLP (RLSLP)} allows rules of the form $A \to B^k$ for integer $k > 2$, meaning $\exp(A) = \exp(B)^k$. The size of an SLP or RLSLP is defined as its number of rules and usually called $g$ or $g_{rl}$, respectively.

The {\em parse tree} of an SLP $G$ is a binary tree whose root is labeled with the start symbol of $G$, the leaves with terminal symbols (i.e., the symbols of $S$ if read left to right), and every internal node labeled with a nonterminal $A$, whose rule is $A \to BC$, has two children, the left one labeled $B$ and the right one labeled $C$. In case of RLSLPs, rules $A \to B^k$ are represented as a node labeled $A$ with $k$ children labeled $B$. In all cases, the parse tree has exactly $n$ leaves.

Any symbol $S[q]$ from an SLP or RLSLP can be extracted in time proportional to the height of its parse tree. We start aiming to extract $\exp(A)[q]$, where $A$ is the start symbol. In general, to extract $\exp(A)[q]$, where $A \to BC$, we recursively extract $\exp(B)[q]$ if $|\exp(B)| \ge q$, and $\exp(C)[q-|\exp(B)|]$ otherwise. For RLSLP rules $A \to B^k$, we reduce $\exp(A)[q]$ to $\exp(B)[1+((q-1) \bmod |\exp(B)|)]$. The recursion ends with $\exp(A)[1] = a$ when $A \to a$.

We will also make use of the so-called {\em grammar tree}. This is obtained by pruning from the parse tree, for each nonterminal label $A$, all but one occurrence of nodes with label $A$. Pruning means converting those other internal nodes to leaves. For RLSLP rules $A \to B^k$, the internal grammar tree node labeled $A$ has a left child labeled $B$ (which can be pruned or not) and a special right child, labeled $B^{k-1}$, which is a leaf. It is easy to see that the grammar tree of an SLP or RLSLP has exactly $2g+1$ or $2g_{rl}+1$ nodes, respectively~\cite{Nav20}. The grammar tree enables algorithms that run on the parse tree to use space $O(g)$ or $O(g_{rl})$ instead of $O(n)$.

An SLP or RLSLP is {\em balanced} if it has height $O (\log n)$ and {\em locally balanced}~\cite[Def~4.2]{CEKNP20} if is there is a constant $c$ such that every nonterminal $A$ has height at most $c \cdot \log |\exp(A)|$ in the parse tree.  A balanced SLP or RLSLP therefore supports worst-case logarithmic-time access.  With a locally balanced SLP or RSLP, moreover, when given any nonterminal $A$ and a position $q$ we can access $\exp (A) [q]$ in $O (\log |\exp (A)|)$ time by recursively descending from $A$ to a leaf in the parse tree.

\section{Incongruity-sensitive access with RLSLPs and block trees}
\label{sec:grammars}

Suppose we are given a locally balanced SLP or RLSLP.  The $g'$ leaves of its grammar tree naturally partition $S$ into $g'$ substrings: let $A_1,\ldots,A_{g'}$ be the leaves' labels, where $A_i$ can be a terminal or a nonterminal (or, with RLSLPs, an iteration $B^{k-1}$). The $i$th substring of the partition is then $\exp(A_i)$, so $S = \exp(A_1) \cdots \exp(A_{g'})$. Let $x_1,\ldots,x_{g'}$ be the starting positions of those substrings in $S$, and $x_{g'+1}=n+1$, that is, $x_1=1$ and $x_{i+1} = x_i + |\exp(A_i)|$. We build a predecessor data structure on the values $x_i$ for $1 \le i \le g'$, storing with each $x_i$ the label $A_i$. This yields the following algorithm to extract $S[q]$:
\begin{enumerate}
\item use a predecessor query to find the largest $x_i \le q$; 
\item extract the $(q-x_i+1)$th symbol of $\exp(A_i)$.
\end{enumerate}

With a standard predecessor data structure the time for the predecessor query could dominate the $O (\log |\exp (A_i)|)$ extraction time.  We can avoid this, however, by using a distance-sensitive predecessor data structure~\cite{BBV12,ehrhardt2017delta}.  From now on we focus on RLSLPs, which are more general.

\begin{lemma} \label{lem:grammar}
Let a locally balanced RLSLP of size $g_{rl}$ generate $S[1\dd n]$. Then there exists a data structure of size $O(g_{rl})$ that can access any $S[q]$ in time $O(\log \ell_q)$.
\end{lemma}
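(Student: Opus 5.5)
We use the two-step algorithm above: a distance-sensitive predecessor query over the grammar-tree leaf boundaries $x_1,\ldots,x_{g'}$, then a descent inside $\exp(A_i)$. Its cost is the predecessor time plus $O(\log|\exp(A_i)|)$. So it suffices to show two things: $|\exp(A_i)| = O(\ell_q)$, or more weakly that $\log|\exp(A_i)| = O(\log \ell_q)$; and the predecessor query costs $O(\log(q - x_i + 1))$ or similar, which is $O(\log|\exp(A_i)|)$. For the predecessor part we invoke a distance-sensitive predecessor structure~\cite{BBV12,ehrhardt2017delta} in $O(g')=O(g_{rl})$ space. It answers a query for $q$ in $O(\log\log(q-x_i+1))$ time or at worst $O(\log(q-x_i+1))$ time, and $q-x_i+1\le|\exp(A_i)|$. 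The total is thus $O(\log|\exp(A_i)|)$, using local balance for the descent from $A_i$.

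The key combinatorial step is to relate $|\exp(A_i)|$ to $\ell_q$. We split by the type of the leaf $A_i$.

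First, suppose $A_i$ is a pruned nonterminal leaf, that is, not the first occurrence of its label in the grammar tree. Then $\exp(A_i)$ also occurs elsewhere in $S$, at the position of the retained internal node labelled $A_i$, which is a different position. Hence $\exp(A_i)$ is a repeated substring containing $S[q]$, and $|\exp(A_i)|\le\ell_q$.

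Second, suppose $A_i$ is a terminal. Then $|\exp(A_i)|=1$ and the cost is constant.

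Third, suppose $A_i$ is a run leaf $B^{k-1}$ of a rule $A\to B^k$. Then $\exp(A_i)=\exp(B)^{k-1}$, and it is preceded by a copy of $\exp(B)$. So $\exp(B)^{k}$ contains $\exp(B)^{k-1}$ at two positions shifted by $|\exp(B)|$. Any character of $\exp(B)^{k-1}$ thus lies in the repeated substring $\exp(B)^{k-1}$ itself, whose two occurrences overlap. Overlapping occurrences still count as repeated, so $|\exp(A_i)|\le\ell_q$.

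If the definition of repeated substring intended by the paper required non-overlapping occurrences, we would instead reach $q$ by first jumping modulo $|\exp(B)|$ into the copy of $B$. That reduces the problem to extracting from $\exp(B)$, and $|\exp(B)|\le \ell_q$ since $\exp(B)$ occurs at least twice non-overlapping. In either reading the cost is $O(\log \ell_q)$.

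We expect the main obstacle to be the predecessor bound. We need the query time to depend on the gap $q-x_i$ rather than on $n$, and the cited structures give $O(\log\log)$ of the distance to the predecessor (or successor) in linear space. We would check that the distance to the successor is not what matters. If it were, we would store the structure so that queries measure distance $q-x_i$ only, for instance by also handling the successor side symmetrically, which is bounded by $|\exp(A_i)|$ as well. A secondary care point is that local balance gives height $c\log|\exp(A_i)|$ for $A_i$. For a run leaf we descend through $B$ instead, which is fine since $|\exp(B)|\le|\exp(A_i)|$.
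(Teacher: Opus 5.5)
Your proof is correct and follows the same route as the paper: a distance-sensitive predecessor query over the grammar-tree leaf boundaries costing $O(\log\log|\exp(A_i)|)$, then the observation that every nonterminal leaf and every run leaf $B^{k-1}$ spells a repeated substring (so $|\exp(A_i)|\le\ell_q$), then local balance for the $O(\log|\exp(A_i)|)$ descent. Your separate handling of terminal leaves, the reduction through $B$ for run leaves, and the non-overlapping fallback are harmless refinements of points the paper leaves implicit.
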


\begin{proof}
We define $x_1,\ldots,x_{g'}$ as above and implement the 2-point algorithm above. For the first point, we use a linear-space distance-sensitive predecessor data structure, which uses $O(g') = O(g_{rl})$ space and provides query time loglogarithmic on the distance between the query $q$ and its answer $x_i$, that is, $O(\log\log (q-x_i+1)) \subseteq O(\log\log(x_{i+1}-x_i))$; note $x_{i+1}-x_i=|\exp(A_i)|$.

We note that no leaf of the grammar tree labeled by a nonterminal can contain a unique substring, because by definition there must be another internal node with the same label, thus generating the same substring. This also holds for leaves labeled $B^{k-1}$, as there are at least two occurrences of $\exp(B)^{k-1}$ in $\exp(A) = \exp(B)^k$. It follows that $x_{i+1}-x_i \le \ell_q$ and thus the predecessor data structure takes time $O(\log\log\ell_q)$.

Point 2 takes time $O(\log (x_{i+1}-x_i)) \subseteq O(\log\ell_q)$ because the RLSLP is locally balanced: each recursive step reduces the height of the current parse tree node by at least 1.
\end{proof}

This is similar in flavor to how we can shorten AVL grammars to have height $O (\log (n / g))$ rather than $O (\log n)$~\cite{Ryt03}, and support $O (\log (n / g))$-time access on them, but that technique does not yield incongruity-sensitive access bounds.

Charikar et al.'s $\alpha$-balanced grammars \cite{CLLPPSS05} and Rytter's AVL grammars \cite{Ryt03} are locally balanced, for example, so they can be used to support the faster access time. So are the RLSLPs of Christiansen et al.~\cite{CEKNP20} and of Kociumaka et al.~\cite{KNO23}. The space bound of the second one \cite{KNO23} yields the following result.

\begin{corollary}
\label{cor:delta}
Let $S[1\dd n]$, over alphabet $[1\dd \sigma]$, have substring-complexity measure $\delta$. Then there exists a data structure of size $O(\delta\log\frac{n\log\sigma}{\delta\log n})$ that can access any $S[q]$ in time $O(\log \ell_q)$.
\end{corollary}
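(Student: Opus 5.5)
The plan is to instantiate Lemma~\ref{lem:grammar} with a specific locally balanced RLSLP whose size is bounded in terms of $\delta$. The natural candidate is the RLSLP of Kociumaka et al.~\cite{KNO23}. I would use two properties of it. First, it is locally balanced: its construction by recompression-style rounds (alternating run-length compression and pairing, with randomized or deterministic restrictions) guarantees that every nonterminal $A$ has parse-tree height $O(\log |\exp(A)|)$. Second, its size satisfies $g_{rl} = O\left(\delta\log\frac{n\log\sigma}{\delta\log n}\right)$.

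Given these two properties, the corollary follows directly from Lemma~\ref{lem:grammar}. Build the grammar tree of this RLSLP and the distance-sensitive predecessor structure on the leaf starting positions $x_i$. This yields a structure of size $O(g_{rl}) = O\left(\delta\log\frac{n\log\sigma}{\delta\log n}\right)$ that supports access to any $S[q]$ in $O(\log \ell_q)$ time.

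The main obstacle is confirming that the grammar of \cite{KNO23} really is locally balanced in the sense of \cite[Def~4.2]{CEKNP20}, i.e., with a uniform constant $c$ for every nonterminal and not only for the start symbol. I would argue this as follows. Each round of the construction shrinks the current sequence by a constant factor in a local sense, so a nonterminal created in round $r$ expands to a string of length at least $c'^{\,r}$ for some constant $c'>1$, and its height in the parse tree is $O(r)$. That gives height $O(\log|\exp(A)|)$.

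One detail needs care: run-length rules $A\to B^k$ contribute only one level of height while multiplying the length by $k$, so they only help. If the \cite{KNO23} construction is known to be locally balanced only up to contracting such runs, I would fall back on the parse-tree descent argument in the proof of Lemma~\ref{lem:grammar}. That argument requires only that each recursive step decrease the height, so any RLSLP with height $O(\log|\exp(A)|)$ per nonterminal suffices.
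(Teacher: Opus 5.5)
Your proposal is correct and follows the paper's argument: the paper notes that the RLSLP of Kociumaka et al.~\cite{KNO23} is locally balanced and has size $O(\delta\log\frac{n\log\sigma}{\delta\log n})$, and plugs it into Lemma~\ref{lem:grammar}, taking local balance from the citation rather than proving it. Do not rely on your own sketch of local balance, though: in recompression-style constructions the round in which a nonterminal is created does not lower-bound its expansion length, since two short symbols can stay unpaired for many rounds and then be paired late, and your ``fallback'' only restates local balance. If you want to avoid depending on that property of \cite{KNO23}, apply Theorem~\ref{thm:local_balancing} (that is, Corollary~\ref{cor:grammar}) to any RLSLP of that size, or use Corollary~\ref{cor:block-tree} together with the block-tree size bound \cite[Cor.~VI.7]{KNP22}, as the paper also remarks.
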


The same idea can be applied to other compressed hierarchical representations not based on grammars. In particular, a {\em block tree}~\cite{BCGGKNOPT21} on $S$ is defined by hierarchically partitioning it into perfect halves until the leaves. Every node is called a block and represents a substring of $S$.
We then prune every internal block of length $m$ such that the substrings of length $2m$ starting with it and ending with it occur earlier in $S$ (but not overlapping it). Those pruned nodes become then leaves in the block tree, and store a pointer to their source, which are the (one or two) blocks of the same length covering the leftmost occurrence of the block's content in $S$. A symbol $S[q]$ is accessed recursively, by going down left or right from the internal nodes, or shifting to the source of leaves. Since we always go down after having shifted to a source, and child block lengths are half the length of the block, this process takes time $O(\log n)$.

The block tree satisfies almost the same conditions we have used to prove Lemma~\ref{lem:grammar}, if we replace grammar-tree leaves by block-tree leaves. 

\begin{corollary} \label{cor:block-tree}
Let $S[1\dd n]$ be represented by a block tree of size $L$. Then there exists a data structure of size $O(L)$ that can access any $S[q]$ in time $O(\log \ell_q)$.
\end{corollary}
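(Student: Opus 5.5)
We adapt the two-step algorithm from the proof of Lemma~\ref{lem:grammar}, with the leaves of the block tree playing the role of the grammar-tree leaves. The leaves of the block tree partition $S$ into consecutive substrings. Let $x_1<\cdots<x_{L'}$, with $L'\le L$, be their starting positions, and set $x_{L'+1}=n+1$. We store these values in a linear-space distance-sensitive predecessor structure~\cite{BBV12,ehrhardt2017delta}, which uses $O(L)$ space. With each $x_i$ we store a pointer to the corresponding leaf.

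To access $S[q]$, we first find the predecessor $x_i\le q$ in $O(\log\log(x_{i+1}-x_i))$ time. We then extract offset $q-x_i+1$ inside that leaf, and there are two cases.
\begin{itemize}
\item If the leaf is a bottom-level explicit block of constant length, we read the character directly.
\item Otherwise the leaf is a pruned block of length $m=x_{i+1}-x_i$. We jump to its source, which consists of one or two blocks of length $m$, and continue the standard block-tree descent from there.
\end{itemize}
In the second case, every source shift is followed by descending one level, and the block length halves at each level. Starting from length $m$, the descent therefore reaches length $O(1)$ after $O(\log m)$ steps. This is the analogue of local balance, which the block tree gives us for free.

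It remains to show $m\le\ell_q$. A pruned block of length $m$ has, by definition, an earlier occurrence in $S$ that does not overlap it, because its content occurs inside the length-$2m$ windows that occur earlier. Hence the block's content is a repeated substring containing $S[q]$, so $m\le \ell_q$. Unpruned bottom-level leaves have constant length, so for them the bound is trivial. The total time is then $O(\log\log\ell_q+\log\ell_q)=O(\log\ell_q)$.

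The step that needs the most care is checking that the descent from a source costs $O(\log m)$ and not time depending on the depth in the whole tree. The source blocks are blocks of length $m$ at the same level, and they may themselves be internal or pruned. Each subsequent pruned leaf met during the descent has length at most $m/2^j$ after $j$ levels, and each shift costs $O(1)$. The recursion is therefore bounded by the number of levels between length $m$ and length $1$, which is $O(\log m)$.
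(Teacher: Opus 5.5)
Your proposal is correct and is essentially the paper's proof: a linear-space distance-sensitive predecessor structure over the leaves' starting positions, the bound $m\le\ell_q$ because a pruned leaf's content occurs earlier in $S$, and $O(\log m)$ extraction because block lengths halve after every source shift (you also handle the explicit bottom-level leaves, which the paper's proof glosses over). The one slip is your remark that the source blocks ``may themselves be internal or pruned'': they are always internal, because if a block covering the leaf's leftmost occurrence were pruned, the length-$2m$ window containing that occurrence would occur earlier, and so would the leaf's content --- and this is exactly what justifies your claim that every source shift is followed by a descent.
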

\begin{proof}
We define $x_1,\ldots,x_L$ as the starting position of the block tree leaves, and $x_{L+1}=n+1$. We build and use the same distance-sensitive predecessor data structure of the proof of Theorem~\ref{lem:grammar}, which uses $O(L)$ space and finds the leaf containing $S[q]$ in time $O(\log\log m)$, where $m=x_{i+1}-x_i$ is the leaf length.

The string of the leaf that contains $S[q]$ occurs earlier in $S$, since that leaf was created. It follows that $\ell_q \ge m$. The process of extracting $S[q]$ from the position $q-x_i+1$ of the leaf takes time $O(\log m) = O(\log \ell_q)$.
\end{proof}

By proper parameterization, the size of block trees can be bounded by $O(\delta\log\frac{n\log\sigma}{\delta\log n})$ \cite[Cor.~VI.7]{KNP22}, so we could prove Corollary~\ref{cor:delta} from Corollary~\ref{cor:block-tree}.  This could be useful because block trees can be smaller than RLSLPs on some texts.

We now aim at locally-balancing a grammar in order to extend Lemma~\ref{lem:grammar} to every RLSLP. There have been several useful results about balancing SLPs and RLSLPs recently, but none that exactly fit our goal.  Ganardi et al.~\cite{GJL21} showed how, given an SLP of size $g$, we can build a balanced SLP of size $O (g)$ of the same string.  Navarro et al.~\cite{NOU22} generalized this result to RLSLPs, but the resulting grammars might not be locally balanced:
even if the parse tree has height $O (\log n)$, if the nonterminal $A_i$ whose expansion we extract from in point (2) of our algorithm has height $\omega (\log |\exp (A)|)$, the extraction can take $\omega (\log \ell_q)$ time.

Ganardi~\cite{GanardiContracting} showed it is generally impossible, given an SLP of size $g$ for $S$, to build an SLP of size $O (g)$ for $S$ whose parse tree is weight-balanced (the sizes of any two siblings' subtrees are within constant factors of each other) or just height-balanced (the heights of any two siblings' subtrees differ by at most a constant).  Fortunately, just as all weight-balanced trees are height-balanced but not vice versa,
all SLPs with height-balanced parse trees are locally balanced but not all locally balanced SLPs have height-balanced parse trees. 
Ganardi therefore left open the possibility of there always being a locally balanced SLP of size $O (g)$ for $S$, and in fact he showed how we can build one.  In Appendix~\ref{app:contracting} we generalize Ganardi's construction from SLPs to RLSLPs and prove the following theorem, which may be of independent interest, in order to extend Lemma~\ref{lem:grammar} to arbitrary RLSLPs.

\begin{theorem}
\label{thm:local_balancing}
Given an RLSLP for $S$ with $g_{rl}$ rules, in $O (g_{rl})$ time we can build a locally balanced RLSLP for $S$ with $O (g_{rl})$ rules.
\end{theorem}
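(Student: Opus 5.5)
Our plan is to reduce the run-length case to Ganardi's construction for ordinary SLPs~\cite{GanardiContracting}, treating each run-length rule $A \to B^k$ as a single "atomic" node that we balance separately and locally. Ganardi's approach works with the DAG of the SLP. It decomposes this DAG into symbol-centric paths (heavy paths with respect to the number of root-to-node paths, or with respect to expansion length). For each path it builds small balanced "spine" grammars. These spine grammars concatenate the hanging subtrees to the left and right of the path. They are weighted by expansion length, so that every generated nonterminal $X$ derives its expansion in depth $O(\log|\exp(X)|)$ below it, plus the depth contributed by the children. Locally balanced then follows by an amortization: each time a descent step crosses from one heavy path to another, $|\exp|$ shrinks by a constant factor, or the weighted (biased-tree) depth telescopes. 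We will follow this scheme. We regard the RLSLP as a DAG whose internal nodes are either binary nodes $A \to BC$ or run nodes $A \to B^k$, where a run node has a single outgoing edge to $B$ with multiplicity.

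\textbf{Step 1: replace each run rule by a locally balanced gadget.} Take $A\to B^k$ and write $k$ in binary. We introduce nonterminals $B_{2^j}\to B_{2^{j-1}}B_{2^{j-1}}$ for $2^j\le k$, together with a right-combining chain for the set bits of $k$. This costs $O(\log k)$ rules per run, which is too many. So instead we \emph{keep} the run rule $A\to B^k$ as is, and note that accessing $\exp(A)[q]$ costs one step to jump to $\exp(B)[((q-1)\bmod|\exp(B)|)+1]$. Since $|\exp(A)|\ge 2|\exp(B)|$, a run node behaves exactly like a binary node whose relevant child has at most half the length. In the weighted analysis we therefore give the child $B$ weight $|\exp(B)|$ and let the run node be a path endpoint: the heavy path never continues through a run edge except as a light edge. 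Light edges by definition halve the length, so they are compatible with the amortization.

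\textbf{Step 2: apply Ganardi's path decomposition to the binary part.} We define the decomposition on the DAG with run nodes treated as in Step 1 (their child edges are light). On each heavy path $A_0\to A_1\to\dots\to A_m$ through binary rules we collect the left and right hanging nonterminals. We then build Ganardi's weight-biased concatenation grammars (for example, via biased/Mehlhorn-style trees weighted by $|\exp|$) over the left-hanging and right-hanging sequences. These have size linear in $m$ and depth $O(\log(W/w_i))$ to a leaf of weight $w_i$ within total weight $W$. Summed over paths, the size is $O(g_{rl})$, and the construction is linear time as in~\cite{GanardiContracting}, given expansion lengths computed bottom-up in $O(g_{rl})$ time.

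\textbf{Step 3: verify local balance.} For any new nonterminal $X$, a root-to-leaf descent alternates between biased-tree segments and run-node jumps. A biased segment from weight $W$ to a hanging child of weight $w$ costs $O(\log(W/w)+1)$. A run jump costs $1$ and at least halves the length. Summing these telescopes to $O(\log|\exp(X)|)$, provided the "$+1$" terms are charged to a halving. We expect this charging to be the main obstacle. Ganardi handles it by ensuring that the light child of each path node has weight at most half that of its parent (or at most half the path total), and we must check that inserting run nodes as path breakers preserves this property. We will first try defining heaviness by expansion length, so that every light edge, including every run edge, halves $|\exp|$. If this breaks Ganardi's size bound, we fall back to his original count-based decomposition and use his contracting-to-heavy-path lemma instead.
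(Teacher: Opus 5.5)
Your top-level reduction agrees with the paper's. You keep every run rule $A\to B^k$ intact and make run-length variables (and terminals) roots whose child edges are light, since $|\exp(B)|\le|\exp(A)|/2$. You run Ganardi's construction on the rest and finish with a Chomsky-normal-form conversion. The gap is in Steps 2--3, which is where all the difficulty lies.

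Local balance forces you to define heaviness by expansion length. But then the heavy edges do not form disjoint paths. Each variable has at most one heavy child, yet a variable can be the heavy child of arbitrarily many parents. So the heavy structure is a forest of in-trees, and the root paths of different variables share suffixes. Building a biased concatenation tree per root path and summing over paths can then cost quadratic size: take a long heavy chain with one extra heavy parent attached to each node. If you instead split the in-trees into disjoint paths, each variable's left and right contexts become concatenations over several paths, which your proposal does not handle.

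Moreover, one biased tree over a hanging sequence gives a nonterminal only for the whole concatenation. Every variable $A$ on the path needs its own nonterminals for its part of the left- and right-hanging sequences. These must be defined by $O(1)$-length rules whose height is logarithmic in \emph{their own} expansion length. The additive $+1$ per biased segment that you flag as the main obstacle is exactly this issue, and the proposal leaves it open.

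Your fallback does not rescue the argument. With the count-based (symbol-centric) decomposition the paths are disjoint. But an edge can leave a path because the number of root-to-node paths doubles, not because $|\exp|$ halves, so the telescoping of Step 3 breaks. That decomposition only yields global $O(\log n)$ height.

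The missing ingredient is Ganardi's theorem on labeled trees, which the paper uses as a black box. For a labeled tree with $|T|$ edges and labels of length at most $t$, one can compute in linear time a \emph{contracting} grammar with $O(|T|)$ variables and right-hand sides of length $O(t)$ that defines all root-to-node prefixes.

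The paper applies it to the left- and right-labeled heavy forests. There the labels are single light children, and run-length variables and terminals are isolated roots. This gives, for every non-root $A$ with heavy-tree root $B$, variables $X_L$ and $X_R$ with $\exp(A)=\exp(X_L)\exp(B)\exp(X_R)$. Replacing $A$'s rule by the concatenation of the right-hand sides of $X_L$, $B$ and $X_R$ yields a contracting rule of constant length, with $O(g_{rl})$ total size.

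Every symbol on a contracting right-hand side has at most half the parent's expansion length. So local balance is immediate, with no biased trees or amortized charging. The final CNF conversion of constant-length rules only multiplies heights by a constant.
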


\begin{corollary} \label{cor:grammar}
Let an RLSLP of size $g_{rl}$ generate $S[1\dd n]$. Then there exists a data structure of size $O(g_{rl})$ that can access any $S[q]$ in time $O(\log \ell_q)$.
\end{corollary}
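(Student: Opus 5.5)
The plan is to combine Theorem~\ref{thm:local_balancing} with Lemma~\ref{lem:grammar}. Starting from the given RLSLP of size $g_{rl}$ for $S$, we first apply Theorem~\ref{thm:local_balancing} to obtain, in $O(g_{rl})$ time, a locally balanced RLSLP for the same string $S$ with $g'_{rl} = O(g_{rl})$ rules. Local balance holds with some constant $c$: every nonterminal $A$ has parse-tree height at most $c\log|\exp(A)|$.

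We then build the structure of Lemma~\ref{lem:grammar} on this new grammar. Take the leaves of its grammar tree, which number $2g'_{rl}+1$ counting nodes, hence $O(g_{rl})$ leaves. Record their starting positions $x_i$ and labels $A_i$, and build the linear-space distance-sensitive predecessor structure on these positions. This uses $O(g'_{rl}) = O(g_{rl})$ space, which gives the claimed size.

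For the query time we reuse the argument of Lemma~\ref{lem:grammar}. Every grammar-tree leaf labeled by a nonterminal, or by an iteration $B^{k-1}$, expands to a substring with at least two occurrences in $S$. So if $S[q]$ lies in leaf $i$, either that leaf is a terminal of length $1 \le \ell_q$, or $x_{i+1}-x_i \le \ell_q$. The predecessor query then costs $O(\log\log\ell_q)$, and the descent from $A_i$ costs $O(\log|\exp(A_i)|) \subseteq O(\log\ell_q)$ by local balance.

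I expect no real obstacle here: all the substantive work is the construction in Theorem~\ref{thm:local_balancing}, which we may assume. The two points to check are minor. First, $\ell_q$ is a property of $S$ alone, so it is unchanged by regrammarization. Second, the convention $\log_d x = \max(1,\log_d x)$ covers the degenerate case $\ell_q \le 1$, where $q$ falls in a terminal leaf and the access takes constant time.
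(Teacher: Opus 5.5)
Your proposal is correct and follows the paper's route: the paper proves Theorem~\ref{thm:local_balancing} precisely so that any RLSLP can be replaced by a locally balanced one with $O(g_{rl})$ rules, after which Lemma~\ref{lem:grammar} applies directly. Re-running the lemma's predecessor-plus-descent argument instead of simply citing it is harmless, and your observation that $\ell_q$ depends only on $S$ is exactly what lets the two steps compose.
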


\section{Preliminaries for parses}
\label{sec:prelims_parses}

In general, a {\em parsing} cuts $S[1\dd n]$ into substrings called \emph{phrases}, $S = B_1 \cdots B_t$, each $B_i$ being either explicit (and of length at most $O(\log_\sigma n)$ and fitting into $O(1)$ space) or a copy of another substring $S[p_i\dd p_i+|B_i|-1]$ called $B_i$'s \emph{source}.  It follows that the phrase containing $S [q]$, if not explicit, has length at most $\ell_q$.  The number of phrases is called the parsing's \emph{size}.  If phrase's sources are always to their left or always to their right then the parsing is called \emph{unidirectional}, otherwise it is called \emph{bidirectional}.

Let $x_1,\ldots,x_t$ be the starting positions of the phrases $B_1,\ldots,B_t$, and $x_{t+1} = n+1$. To access $S$ from the parsing, we define $f(q)=q$ if the phrase $B_i$ that $S[q]$ belongs to is explicit, and $f(q) = p_i + q - x_i$ otherwise; note that $S[q] = S[f(q)]$. To obtain $S[q]$, we apply $f$ iteratively until we reach a character in an explicit phrase.  If this never happens then $S$ cannot be decoded from the parse.
 
The \emph{height} $h_q$ of $S [q]$ is the number of times we must apply apply $f$ to $q$ before reaching a character in an explicit block, and $S [q]$'s {\em referencing chain} is $q, f(q), \ldots, f^{h_q}(q)$. We can thus extract $S[q]$ in $h_q +1$ steps. If we use a predecessor data structure to find the phrase containing $f^k(q)$ for $0 \leq k \leq h_q$ --- that is, each $B_i$ such that $x_i$ is the maximum value with $x_i \le f^k(q)$ --- then we can access $S[q]$ in time $O((h_q +1) \log\log n)$.  We write simply $h$ to denote the maximum height of any character in $S$.

\section{Incongruity-sensitive access with parses}
\label{sec:parses}

We have actually already parsed $S$ for Lemma~\ref{lem:grammar}, into $S = \exp(A_1) \cdots \exp(A_{g'})$, since each leaf in the grammar tree is either a terminal or a non-terminal that appears as an internal node elsewhere in the grammar tree.  Indeed, we can view Lemma~\ref{lem:grammar} as applying to a special kind of parse with two useful properties: first, each character $S [q]$ has height $h_q$ at most logarithmic in the length of the phrase containing $S [q]$; second, sources' boundaries align nicely with phrases' boundaries in such a way that when extracting $S [q]$ we need perform only one predecessor query, to find the phrase containing $S [q]$.  Needless to say, most parsings do not have these properties.

Typical parsings~\cite{LZ76,SS82} have good bounds on $t$, but do not provide upper bounds for the maximum height $h$ other than the number $t$ of phrases and sometimes only the length $n$ of the uncompressed string.  For example, the original Lempel-Ziv parse \cite{LZ76} (whose parsing size is called $z$) and bidirectional macro schemes \cite{SS82} (whose optimal parsing size is called $b$) can achieve less space than the smallest grammars and still recover $S$. It always holds $\delta \le b \le z \le 2g_{rl}^* \le 2g^*$, where $g^*$ and $g_{rl}^*$ are the sizes of the smallest SLP and RLSLP for $S$, respectively; the differences are asymptotically strict for some string families \cite{Nav20.1}. 

Some parsings increase $t$ in order to obtain better upper bounds on $h$~\cite{KP18,KS22,LMN24,BFHMP24,shibata2025lzse}, but those with $t \in O (\gamma \log (n / \gamma))$ do not achieve $h = o (\log (n/t))$.  Here $\gamma$, with $\delta \le \gamma \le b$, is the size of the smallest string attractor for $S$ \cite{KP18}, so the minimum number of characters in $S$ we need to select such that at least one selected character appears in some occurrence of any substring of $S$.  For example, we can have $h = O(\log n)$ with $t = O(g_{rl}^*)$ with a unidirectional parsing~\cite{BFHMP24}, or $h=O(\log(n/\gamma))$ with $t=O(\gamma\log(n/\gamma))$ with a bidirectional parsing~\cite{KP18}. For some string families, these parsings can have $t$ asymptotically smaller than the sizes of the smallest RLSLPs or block trees.

Our main result in this section is to improve the access time for $S [q]$ from our baseline of $O ((h_q + 1) \log \log n)$ to $O (h_q + \log_w \ell_q)$ for a class of ``well-behaved'' parses. 

\begin{definition}
A parse is {\em $\alpha$-contracting} if no source overlaps a phrase that is strictly more than $\alpha$ times longer than it.
\end{definition}

Notice that a $c$-contracting parsing is also $c'$-contracting for any $c' \geq c$. Crucially, if a parse is $\alpha$-contracting with $\alpha \leq 1 - \epsilon$ for some positive constant $\epsilon$, then $h \in O (\log n)$ and, moreover, $h_q \in O (\log \ell_q)$ for all $q$.

While the parsings induced by RLSLPs and block trees are naturally $1$-contracting and $1 / 2$-contracting, respectively, general parses are not even $o (n)$-contracting.  We will be able, however, to provide efficient access to other $\alpha$-contracting parses with $\alpha$ polynomial in the word size $w \geq \log n$, as long as the accessed characters have small height. This will turn out to be significantly more challenging.

\begin{theorem} \label{thm:access on alpha-balanced}
Let $S[1\dd n]$ have an $\alpha$-contracting parse of $t$ phrases for any $\alpha \le w^{O(1)}$. Then there exists a data structure of size $O(t)$ that can access any $S[q]$ in time $O(h_q + \log_w \ell_q)$.
\end{theorem}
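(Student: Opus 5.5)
We follow the referencing chain $q, f(q), \ldots, f^{h_q}(q)$ and make each step cost $O(1)$ amortized, plus one initial search costing $O(\log_w \ell_q)$. The naive method pays one predecessor query per step, giving $O((h_q+1)\log\log n)$. The aim is to replace these global searches by \emph{local} searches whose cost telescopes, using the $\alpha$-contracting property.

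\textbf{Step 1 (first phrase).} To find the phrase $B_i$ containing $q$, we use a distance-sensitive predecessor structure as in Lemma~\ref{lem:grammar}, but in the word-RAM fusion-node variant with query time $O(\log_w (q-x_i+1))$. If $B_i$ is explicit we answer in $O(1)$. Otherwise $|B_i|\le \ell_q$, so this step costs $O(\log_w \ell_q)$. If no suitable $w$-ary distance-sensitive structure is available, we build one ourselves: a $w$-ary tree of exponentially growing search windows around $q$, searched with fusion nodes.

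\textbf{Step 2 (following a source).} Suppose we are at position $f^k(q)$ inside phrase $B_j$ with source $S[p_j\dd p_j+|B_j|-1]$. By the $\alpha$-contracting property, every phrase overlapping that source has length at most $\alpha|B_j|$. For each non-explicit phrase $B_j$ we store a pointer to the phrase containing $p_j$ (total space $O(t)$). The phrase containing $f(\cdot)$ is then reached by moving right from that phrase. A linear scan is too slow, because the source may overlap many small phrases. We therefore attach to each phrase $B_j$ a local search structure over the phrase boundaries within its source. Phrases shorter than $|B_j|/w^{c}$ are grouped into a single coarse level, so that every candidate phrase has length in $[|B_j|/w^{c}, \alpha|B_j|]$, a range spanning $w^{O(1)}$ scales. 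The next phrase then costs $O(1)$ via fusion-style search when it lies in the coarse part. If instead it is much shorter than $B_j$, the search descends by a factor $w$ in length at $O(1)$ cost.

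\textbf{Step 3 (amortization and space).} We account for the chain by the potential $\log_w |B_{\text{current}}|$. A step where the next phrase has length at least $|B_j|/w^{O(1)}$ costs $O(1)$. A step that descends $d$ levels costs $O(1+d)$, but then the potential drops by $d$. Because of $\alpha\le w^{O(1)}$, the potential increases by only $O(1)$ per step. Summing gives $O(h_q+\log_w \ell_q)$, since the initial potential is $\log_w|B_i|\le\log_w\ell_q$.

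The main obstacle is keeping space $O(t)$. The local structures of different phrases overlap, since many sources cover the same phrases. I would therefore not store them per phrase. Instead I would build one global hierarchy over the phrase boundaries, ordered by phrase length in levels of factor $w$, in the style of a block tree or skip structure. The query from a source pointer becomes a finger search in this shared hierarchy, starting at the level of $|B_j|$. Each phrase then appears in $O(1)$ levels amortized, and finger search across $d$ levels costs $O(1+d)$. Getting this finger search to run in $O(1)$ time per level with fusion nodes, while charging each descent to the potential, is the delicate part.
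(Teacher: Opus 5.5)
Your Step~3 accounting matches the paper: the potential $\log_w|B_{\mathrm{cur}}|$ is exactly what the paper telescopes, with $\alpha$-contraction bounding each increase by $\log_w\alpha=O(1)$. Step~1 is fine once you invoke Theorem~\ref{thm:distance-sensitive}, which gives $O(\log\log_w\Delta)\subseteq O(\log_w\ell_q)$; the $O(\log\log\Delta)$ structure used in Lemma~\ref{lem:grammar} would not suffice when $\ell_q\approx w$.

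\textbf{The gap is Step~2, which is the heart of the theorem.} You never build a structure that, given a source $[y,z)$ and a position in it, finds the containing phrase in $O(1+\max(0,\log_w\frac{z-y}{|B_{\mathrm{next}}|}))$ time using $O(1)$ words per source. You defer this yourself as ``the delicate part'', and the sketch does not close it:
\begin{itemize}
\item Anchoring at the phrase containing $p_j$ is the wrong entry point. That phrase may have length $1$, so a finger search from it can need $\Omega(\log_w|B_j|)$ levels of climbing before it moves toward the target. When the target phrase is long, no drop in potential pays for this, and you get $h_q\log_w\ell_q$ instead of $h_q+\log_w\ell_q$.
\item A positional hierarchy (block-tree style) has a different defect. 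The depth at which a position is resolved depends on its distance to the nearest phrase boundary, not on the length of its phrase, which breaks the telescoping.
\item ``Each phrase appears in $O(1)$ levels amortized'' is asserted, not shown.
\end{itemize}

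\textbf{The missing idea is one shared length-biased search tree plus a covering property for arbitrary intervals.} The paper does this in three steps.
\begin{itemize}
\item It builds a $w$-ary trie in which phrase $[x_i,x_{i+1})$ is a single leaf at depth $\lfloor\log_w\frac{n}{x_{i+1}-x_i}\rfloor+2$. The leaf's path is obtained by truncating the base-$w$ expansion of the midpoint $\frac{x_i+x_{i+1}}{wn}$, a Gilbert--Moore code (Lemma~\ref{lem:subintervals2}).
\item It proves (Lemma~\ref{lem:leaves2}) that the leaves lying fully inside any interval $[y,z)$ are covered by at most $w$ subtrees, all descending from one node $u$. The two extreme subtrees are rooted at specific nodes $v_1,v_{d'}$; the others are reached from children of $u$ through unary paths. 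All roots lie at depth at least $\log_w\frac{n}{z-y}-O(1)$.
\item Hence each source stores only $u$, $v_1$, $v_{d'}$, the two children of $u$ above them, and the indices of the first and last phrase boundaries inside it; partially covered phrases are handled by comparisons. A lookup is one fusion-tree query at $u$, one jump, and a descent of $\log_w\frac{z-y}{|B_{\mathrm{next}}|}+O(1)$ edges (Lemma~\ref{lem:main2}). Compacting unary paths gives $O(t)$ space.
\end{itemize}

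Your intuition of a single hierarchy that places phrases by length points in this direction. Without such a covering lemma, though, your proposal does not yet deliver the per-step cost your potential argument relies on.
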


This result encompasses those of Section~\ref{sec:grammars}, as those are all $1$-contracting parsings with $h_q = O(\log\ell_q)$.  It also applies to other $\alpha$-contracting parsings not based on grammars. For example, Kempa and Prezza \cite[Thm 3.12]{KP18} describe a bidirectional parse of size $t = O(\gamma\log(n/\gamma))$ where the sources only overlap phrases no longer than half the phrase's size, implying $\alpha=1/2$ and $h_q = O(\log \ell_q)$, so they obtain $O(\log\ell_q)$ access time.

We begin the proof of Theorem~\ref{thm:access on alpha-balanced} with a modified construction (based on a classic result due to Gilbert and Moore~\cite{GM59}) of Bille et al.'s~\cite{BLRSSW15} interval-biased search trees:

\begin{lemma}
\label{lem:subintervals2}
Suppose we are given a partition of the interval $[1, n)$ into $s$ subintervals
\[\left[ \rule{0ex}{2.5ex} x_1 = 1, x_2 \right), [x_2, x_3), \ldots, [x_{s - 1}, x_s), \left[ \rule{0ex}{2.5ex} x_s, x_{s + 1} = n \right)\,.\]
For any integer $d\ge 2$, we can build a $d$-ary tree $T$ storing keys $x_1, \ldots, x_s$ (regarded as sequences of $d$-ary digits) at its leaves, in which each $x_i$ has depth
\[\left\lfloor \log_d \frac{n}{x_{i + 1} - x_i} \right\rfloor + 2\,.\]
\end{lemma}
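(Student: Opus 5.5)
We build $T$ recursively, following Gilbert and Moore's idea: at each node we split the current range into $d$ equal aligned pieces. The role of the tree is to act as a trie over the $d$-ary representation of positions, in which each key $x_i$ is stored at the node whose aligned $d$-ary block first becomes small relative to the length $m_i = x_{i+1}-x_i$ of its subinterval.

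Concretely, let $L = \lceil \log_d n\rceil$ and view $[0, d^L)$ as the root range. A node at depth $k$ corresponds to an aligned block of length $d^{L-k}$, and its $d$ children are the $d$ aligned sub-blocks. For each $i$, let $k_i = \lfloor \log_d (n/m_i) \rfloor + 1$. Since $d^{L-k_i} \le d^{L-1}\cdot m_i/n \cdot d^{\,0}$, and $d^{L-1}<n$, the aligned blocks at depth $k_i$ have length less than $m_i$. Hence some block at that depth lies entirely inside $[x_i, x_{i+1})$, or, failing that, the block containing $x_i$ is such that its successor block lies inside. I plan to pick the canonical choice: the first block at depth $k_i$ that intersects $[x_i,x_{i+1})$, namely the block containing $x_i$. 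We then store $x_i$ at a leaf hanging one level below that block's node, which gives depth $k_i+1 = \lfloor \log_d (n/m_i)\rfloor + 2$ as required.

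The main obstacle is ensuring that this is a valid tree: distinct keys must not collide at the same leaf slot, and each internal node must have at most $d$ children. If $x_i$ and $x_j$ with $i<j$ were assigned to the same block $\beta$ at depths $k_i \le k_j$, then $x_j \in \beta$ would force $x_{j} - x_i < |\beta| < m_i$. But $x_j \ge x_{i+1} = x_i + m_i$, a contradiction. So each block at depth $k_i$ holds the key $x_i$ only, and no deeper key is assigned inside a block that already stores a shallower key. Two keys with the same $k$ lie in distinct blocks. Consequently, each block receives at most one key, placed as a single extra child. To fit this within arity $d$, I would let the stored key occupy the child slot corresponding to the sub-block containing $x_i$. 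Any other key in that sub-block would have to lie in $[x_i, x_{i+1})$, which is impossible, so the slot is free.

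The search semantics remain those of a trie, following $d$-ary digits. To finish, we prune empty subtrees so the tree has $O(s)$ nodes, and verify that the pruning does not change the depths. This is where I expect the careful bookkeeping to lie. In particular, $n$ may not be a power of $d$, and the off-by-one in the floor must be checked so that the bound uses $\lfloor\cdot\rfloor + 2$ rather than $+3$.
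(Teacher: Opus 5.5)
Your construction breaks at the placement step, and this is separate from the off-by-one you flagged. You store $x_i$ at the aligned depth-$(k_i+1)$ block containing the \emph{left endpoint} $x_i$. You then claim that any other key in that block would have to lie in $[x_i,x_{i+1})$. That is false: the block can extend to the left of $x_i$ and contain earlier, shorter subintervals. Your collision argument assumes $i<j$ and $k_i\le k_j$ at the same time, which is not without loss of generality; the bad case is $i<j$ with $k_i>k_j$.

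Concretely, take $d=2$, $n=64$ (a power of $d$, so scaling plays no role) and the partition $[1,13),[13,64)$. For $x_2$: $k_2=\lfloor\log_2(64/51)\rfloor+1=1$, so $x_2=13$ goes to the depth-$2$ block $[0,16)$. For $x_1$: $k_1=\lfloor\log_2(64/12)\rfloor+1=3$, so $x_1=1$ goes to the depth-$4$ block $[0,4)\subset[0,16)$. The node of $x_2$ is therefore a proper ancestor of the node of $x_1$, so the codewords are not prefix-free and $x_2$ is not at a leaf. You also cannot fall back on a block at depth $k_i$ lying inside $[x_i,x_{i+1})$: blocks there may have length between $m_i/2$ and $m_i$, and then neither the block containing $x_i$ nor its successor need fit.

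The length estimate is also wrong as written. The inequality $d^{L-k_i}\le d^{L-1}m_i/n$ goes the wrong way; $d^{k_i}>n/m_i$ only gives $d^{L-k_i}<d^{L}m_i/n$. With $L=\lceil\log_d n\rceil$ the factor $d^L/n$ can be close to $d$: for $d=2$, $n=9$, $m_i=3$ you get $k_i=2$ and blocks of length $4>m_i$. Moreover, for $m_i=1$ the target depth $\lfloor\log_d n\rfloor+2$ exceeds $L$, so integer blocks no longer exist there.

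The missing idea, which fixes both problems, has two parts:
\begin{itemize}
\item work with real $d$-adic subintervals of $[0,n)$ instead of padding to $[0,d^L)$;
\item anchor each key at the \emph{midpoint} $c_i=(x_i+x_{i+1})/2$ instead of at $x_i$.
\end{itemize}
At depth $k_i+1$ the blocks then have length $n/d^{k_i+1}<m_i/d\le m_i/2$. So the block containing $c_i$ lies strictly inside $[x_i,x_{i+1})$, and since the subintervals are disjoint, the blocks of distinct keys are disjoint and cannot nest. This is exactly the paper's argument. There $\beta_i$ is the first $\lfloor\log_d(n/m_i)\rfloor+2$ base-$d$ digits of $(x_i+x_{i+1})/(dn)$, prefix-freeness follows from $|(x_j+x_{j+1})-(x_i+x_{i+1})|>\max(m_i,m_j)$, and $T$ is the trie of the $\beta_i$.

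Finally, pruning empty subtrees does not yield $O(s)$ nodes, because unary paths remain. The lemma does not need this bound; the paper compacts unary paths only later, in the proof of the theorem.
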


\begin{proof}
For $1 \leq i \leq s$, let $\beta_i$ be the string consisting of the first
\[\left\lfloor \log_d \frac{n}{x_{i + 1} - x_i} \right\rfloor + 2\]
digits to the right of the base-$d$ point in the base-$d$ representation of $\frac{x_i + x_{i + 1}}{d n}$.

Assume $j>i$ (the other case is symmetric). Rewrite $(x_j+x_{j+1})-(x_i+x_{i+1}) = x_{j+1} - (x_i+x_{i+1}-x_j)$. Since $x_{j+1} > x_j \ge x_{i+1}$ we have, for the last term, $x_i+x_{i+1}-x_j \le x_i$, and thus the right-hand of the inequality is $x_{j+1} - (x_i+x_{i+1}-x_j) \ge x_{j+1}-x_i > x_{i+1}-x_i$.
This means that $(x_j+x_{j+1})$ differs from $(x_i+x_{i+1})$ by at least $x_{i + 1} - x_i$ for all $j\neq i$. Hence, $\frac{x_i + x_{i + 1}}{d n}$ differs from $\frac{x_j + x_{j + 1}}{d n}$ by at least $\frac{x_{i + 1} - x_i}{d n}$ so their base-$d$ representations agree on fewer than  $\log_d \frac{dn}{x_{i + 1} - x_i} \le \log_d \frac{n}{x_{i + 1} - x_i} + 1$ digits to the right of their base-$d$ points.  It follows that $\beta_i$ is not a prefix of $\beta_j$.

We build the $d$-ary trie for $\beta_1, \ldots, \beta_s$, ordering nodes' children by the natural order induced by digits $[0,d-1]$, store $x_1, \ldots, x_s$ at the leaves, and discard the edge labels.
\end{proof}

Next, we prove a lemma about the trie $T$ used in our proof of Lemma~\ref{lem:subintervals2}, similar to Gagie et al.'s~\cite[Sec.~2]{GGKNP12} observation about the structure of parse trees of locally balanced SLPs:

\begin{lemma}
\label{lem:leaves2}
Given $l $ and $r$ with $1 \leq l  < r \leq s$, there exist $d' \le d$ nodes $v_{1}, \dots, v_{d'}$ at depths at least $\log_d \frac{n}{x_r + x_{r + 1} - x_l  - x_{l  + 1}}$ in $T$ whose subtrees together contain the $l $th through $r$th leaves. 
Furthermore, the lowest common ancestor $u$ of the $l $th and $r$th leaves is such that (i) $lca(v_a,v_b)=u$ for all $1 \le a < b \le d'$, and (ii) the path from $u$ to $v_a$ (both excluded) contains only nodes with out-degree equal to one for all $1 < a < d'$.
\end{lemma}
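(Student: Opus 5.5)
We work directly with the trie $T$ built in the proof of Lemma~\ref{lem:subintervals2}, where leaf $i$ is reached by spelling $\beta_i$, a prefix of the base-$d$ expansion of $y_i = \frac{x_i + x_{i+1}}{dn}$. Since $y_1 < y_2 < \cdots < y_s$ and children are ordered by digit, the leaves appear left to right in the order $1,\dots,s$. Set $D = x_r + x_{r+1} - x_l - x_{l+1}$ and $k = \lfloor \log_d \frac{n}{D} \rfloor$. The plan is to find a node of depth roughly $k$ that is an ancestor of all leaves $l,\dots,r$, or else a pair of adjacent digit prefixes that covers them, and then take the $v_a$ as suitable children of the lowest common ancestor $u$.

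\textbf{Step 1: length of the key interval.} All keys $y_l,\dots,y_r$ lie in the interval $[y_l, y_r]$, whose length is $D/(dn) \le d^{-k-1}$. Consider the length-$k$ prefixes of the base-$d$ expansions of these keys, that is, the grid of cells of width $d^{-k}$. An interval this short meets at most two consecutive cells, and when it meets two, both points sit within $d^{-k-1}$ of their common boundary. At the next level, cells of width $d^{-k-1}$, the interval meets at most two consecutive cells as well.

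\textbf{Step 2: the prefixes are real trie nodes.} Every $\beta_j$ has length at least $\lfloor \log_d \frac{n}{x_{j+1}-x_j} \rfloor + 2 \ge k+1$. This holds because, for $l \le j \le r$, we have $x_{j+1} - x_j \le D$: the interval $[x_j, x_{j+1})$ is contained in $[x_l, x_{r+1})$ and $D \ge x_{r+1} - x_l - (x_{l+1} - x_l) \ge x_{j+1} - x_j$ when $l < j$, with a similar check for the endpoints. Hence the depth-$(k+1)$ prefixes of all the $\beta_j$ are genuine nodes of $T$. By Step~1 there are at most two of them, adjacent in digit order.

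\textbf{Step 3: choosing the $v_a$.} If the depth-$(k+1)$ prefixes are the children of a single depth-$k$ node, set $u = $ lca and let the $v_a$ be the children of $u$ whose subtrees contain leaves among $l,\dots,r$. These children form a contiguous block of at most $d$ children, and their lca is $u$. Otherwise the two prefixes $P$ and $P'$ are adjacent but straddle a carry, such as $\ldots c(d{-}1)$ versus $\ldots(c{+}1)0$. Then $u$ is the branching ancestor where $P$ and $P'$ diverge. Take $v_1$ as the node of depth $k+1$ on the $P$ side (or its child covering $l$) and $v_{d'}$ symmetrically on the $P'$ side. Any intermediate $v_a$ are the children of $u$ strictly between the two branches. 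Because $y_l,\dots,y_r$ fall only in $P$ and $P'$, the subtrees between contain no keys, so in the trie they do not exist. That makes $d' = 2$ in this case, or $d'$ up to $d$ in the first case, as required.

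\textbf{Step 4: the degree condition (ii).} The path from $u$ down to a middle $v_a$ ($1 < a < d'$) must consist of unary nodes. This is the main obstacle. I would choose $v_a$ as the \emph{highest} node below the corresponding child of $u$ that still has depth at least $\log_d(n/D)$, and argue that keys strictly between leaf $l$ and leaf $r$ in a middle branch cannot cause branching above that depth. If they did, two such keys would differ by at least $d^{-k}$, contradicting Step~1. Condition (i) follows because each $v_a$ lies in a distinct child subtree of $u$. The delicate point is reconciling the depth bound with the carry case. I expect to handle it by moving $v_1$ and $v_{d'}$ down along unary chains if needed, which preserves both the coverage and the depth lower bound.
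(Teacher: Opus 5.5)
Your Steps 1--3 already prove the lemma completely, by a route different from the paper's; you have misplaced the difficulty. With $k=\lfloor\log_d(n/D)\rfloor$ you have $k+1>\log_d(n/D)$. Every $\beta_j$ with $l\le j\le r$ reaches depth $k+1$, because $x_{j+1}-x_j\le D$. Since $y_r-y_l=D/(dn)\le d^{-(k+1)}$, these keys occupy at most two consecutive depth-$(k+1)$ nodes.

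\emph{Shared depth-$k$ parent.} Then $u$ has depth at least $k$, so the children of $u$ from the branch toward leaf $l$ through the branch toward leaf $r$ have depth at least $k+1$. Taking them as the $v_a$ gives coverage and (i) at once. Condition (ii) is vacuous, because the path from $u$ to a child, with both ends excluded, is empty.

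\emph{Carry case.} The two branches of $u$ carry adjacent digits $c$ and $c+1$, so $d'=2$ and (ii) is again vacuous, while $v_1=P$ and $v_2=P'$ have depth $k+1$.

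So Step 4 should be deleted, not repaired. Its key inference, that branching above depth $k$ would force two keys to differ by at least $d^{-k}$, is false because of carries. The keys $0.\lambda c(d{-}1)(d{-}1)\cdots$ and $0.\lambda(c{+}1)00\cdots$ are arbitrarily close, yet their trie paths diverge at depth $|\lambda|$. This is exactly the phenomenon you handled correctly in Step 3. Also drop the parenthetical ``or its child covering $l$'': a single child of $P$ need not contain all the leaves among $l,\dots,r$ lying below $P$.

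The paper argues node by node instead:
\begin{itemize}
\item $v_1$ is the lowest common ancestor of leaf $l$ and the rightmost leaf under $u^l$, and symmetrically for $v_{d'}$. Its depth is bounded by contradiction: a shallower $v_1$ would give a digit-prefix range longer than $D/(dn)$ that would have to contain $y_l$.
\item Each middle $v_a$ is the bottom of the unary chain below its child of $u$.
\end{itemize}
These are the deepest possible choices, matching what the structure of Lemma~\ref{lem:main2} stores and descends from.

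Your single pigeonhole observation at depth $k+1$ buys three things. It is more elementary. It handles carries explicitly, whereas the paper's middle-node step infers a deep lowest common ancestor directly from a small key difference; that inference is only safe there because a middle child already forces $u$ to be deep. And it exposes the structure behind the lemma: $d'\ge 3$ occurs only when $u$ itself is deeper than $\log_d(n/D)$. The price is that your $v_a$ may be shallower than the paper's. That is harmless, since the lemma and its use in Lemma~\ref{lem:main2} only require the stated lower bound on depth.
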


\begin{proof}
Recall that $u$ is defined to be the lowest common ancestor of the $l $th and $r$th leaves. 
Let $u^l  \neq u^r$ be the children of $u$ being ancestors of the $l $th and $r$th leaves, respectively. The integer $d'\le d$ will be the number of siblings (children of $u$) between $u^l $ and $u^r$, both included.  This is illustrated in Figure~\ref{fig:trie_T} in Appendix~\ref{app:fig_for_leaves2}.

\paragraph{Finding $v_1$}
We choose $v_1$ to be the lowest node on the path from $u^l $ to the $l $th leaf such that $v_1$ is an ancestor of the rightmost leaf in the subtree rooted in $u^l $.

Letting $v$ be a trie node and $c\in [0,d-1]$ be a base-$d$ digit, let us denote with $v(c)$ the child of $v$ by digit $c$.
For the sake of a contradiction, assume $v_1$'s depth is less than $\log_d \frac{n}{x_r + x_{r + 1} - x_l  - x_{l  + 1}}$. Consider the path label $\beta$ of $v_1$.
Let $y$ be the base-$d$ digit corresponding to the child $v_1(y)$ of $v_1$ being an ancestor of the rightmost leaf in the subtree rooted in $u^l $.
The numbers in $[0, 1)$ whose base-$d$ representations start with $0.\beta y$ form a range $R$ of size $|R| = d^{- (|\beta| + 1)}$.
Since above we assume $|\beta| < \log_d \frac{n}{x_r + x_{r + 1} - x_l  - x_{l  + 1}}$, the size of $R$ satisfies $|R| = d^{- (|\beta| + 1)} > \frac{x_r + x_{r + 1} - x_l  - x_{l  + 1}}{d n}$.

First, observe that $R$ must end after $\frac{x_l +x_{l +1}}{dn}$: this follows from the definition of $v_1$, which implies that the path label of the $l $th leaf starts with $\beta y'$ for some $y'<y$.
Additionally, since $R$ cannot include $\frac{x_r + x_{r + 1}}{d n}$ --- otherwise $v_1$ would be an ancestor of the $r$th leaf, contrary to our choices of $u$ and $v_1$ --- it must start before 

\[\frac{x_r + x_{r + 1}}{d n} - \frac{x_r + x_{r + 1} - x_l  - x_{l  + 1}}{d n}
= \frac{x_l  + x_{l  + 1}}{d n}\,.\]
This means that $R$ includes $\frac{x_l  + x_{l  + 1}}{d n}$, which implies that the base-$d$ representation of $\frac{x_l  + x_{l  + 1}}{d n}$ starts with $0.\beta y$, so both the $l $th leaf and the rightmost leaf in the subtree rooted in $u^l $ are in the subtree rooted in $v_1(y)$, contrary to our choice of $v_1$: a contradiction.  Therefore, we conclude that $v_1$'s depth is at least $\log_d \frac{n}{x_r + x_{r + 1} - x_l  - x_{l  + 1}}$.

\paragraph{Finding $v_{d'}$}
The choice of $v_{d'}$ is symmetric to that of $v_1$. 
Recall that $u^r$ is the child of $u$ being an ancestor of the $r$th leaf.
We choose $v_{d'}$ to be the lowest node on the path from $u^r$ to the $r$th leaf such that $v_{d'}$ is an ancestor of the leftmost leaf in the subtree rooted in $u^r$.
A symmetric argument shows that $v_{d'}$'s depth is also at least $\log_d \frac{n}{x_r + x_{r + 1} - x_l  - x_{l  + 1}}$.

\paragraph{Finding $v_2, \dots, v_{d'-1}$} If $d'>2$, we are left to show how to find nodes $v_2, \dots, v_{d'-1}$. Let $w_k$, for $k=1, \dots, d'$, denote the $k$th child of $u$ starting from $u^l $ included (that is, $u^l  = w_1$, $w_2$ is the next sibling of $w_1$, and so on until $w_{d'}=u^r$). We define $v_{k}$ to be the deepest descendant of $w_k$ being an ancestor of all the leaves below $w_k$. In other words, $v_{k}$ is the shallowest descendant of $w_k$ with at least two children, or the unique leaf below $w_k$ if no such descendant exists (note $v_{k} = w_k$ if and only if $w_k$ has at least two children). By the definitions of $u$ and $v_1, \dots, v_{d'}$, it is clear that the subtrees below $v_1, \dots, v_{d'}$ contain the $l$th through $r$th leaves. We are left to show that also nodes $v_2, \dots, v_{d'-1}$ are at depths at least $\log_d \frac{n}{x_r + x_{r + 1} - x_l - x_{l + 1}}$. Consider any such node $v_q$, with $q \in [2,d-1]$. If $v_q$ is a leaf, say, the $t$th leaf, then its depth is at least $\log_d \frac{n}{x_{t + 1} - x_t}$, where $x_t \ge x_{l+1}$ and $x_{t+1} \le x_r$. Then, the depth of $v_q$ is at least $\log_d \frac{n}{x_{t + 1} - x_t} \ge \log_d \frac{n}{x_{r} - x_{l+1}} \ge \log_d \frac{n}{x_{r} + x_{r+1} - x_l - x_{l+1}}$ (the latter inequality comes from $x_{r+1} - x_l > 0$). 

If $v_q$ is not a leaf, consider any two leaves below $v_q$ --- let us call them the $l'$th and $r'$th leaves (with $r'>l'$ without loss of generality). Since it must be $l < l' < r' < r$ and, hence, $x_l < x_{l+1} \le x_{l'} < x_{l'+1} \le x_{r'} < x_{r'+1} \le x_r < x_{r+1}$, we have $\frac{x_{r'}+x_{r'+1}}{dn} - \frac{x_{l'}+x_{l'+1}}{dn} \le \frac{x_r + x_{r+1}-x_l - x_{l+1}}{dn}$, meaning that the lowest common ancestor of the $l'$th and $r'$th leaves is at depth at least $\log_d\frac{dn}{x_r + x_{r+1}-x_l - x_{l+1}} > \log_d\frac{n}{x_r + x_{r+1}-x_l - x_{l+1}}$. Since this holds for all pairs of leaves below $v_q$ and $v_q$ is the deepest node being an ancestor of all those leaves, we obtain that $v_q$ itself is at depth at least $\log_d\frac{n}{x_r + x_{r+1}-x_l - x_{l+1}}$.
\end{proof}

The next lemma follows from Lemmas~\ref{lem:subintervals2} and~\ref{lem:leaves2}; we give the proof in Appendix~\ref{app:main2}.

\begin{lemma} \label{lem:main2}
Suppose we are given the tree $T$ of Lemma~\ref{lem:subintervals2} for the partition 
\[\left[ \rule{0ex}{2.5ex} x_1 = 1, x_2 \right), [x_2, x_3), \ldots, [x_{s - 1}, x_s), \left[ \rule{0ex}{2.5ex} x_s, x_{s + 1} = n \right)\,,\]
and $t$ arbitrary subintervals $[y_1, z_1) \ldots, [y_t, z_t)$ of $[1, n)$, with $n, x_1, \ldots, x_s, y_1, \ldots, y_t, z_1, \ldots, z_t$ all integers.  
Then, we can build an $O (t + s)$-space data structure on top of $T$
with which, given $i$ and an integer $q$ in $[y_i, z_i)$, we can find $j$ such that $q$ is in $[x_j, x_{j + 1})$ by traversing at most $\max\left(0,\log_d \frac{z_i - y_i}{x_{j + 1} - x_j}\right) + O(1)$ edges of $T$ and performing $O(\log_w d)$ additional constant-time operations, where $w$ is the memory word size (in bits). 
\end{lemma}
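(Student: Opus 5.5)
We plan to attach to each query interval $[y_i, z_i)$ a small ``entry point'' into $T$, so that a search for $q \in [y_i, z_i)$ starts near the leaves rather than at the root.

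\emph{Choosing the entry points.} For each $i$, let $l$ and $r$ be the indices of the phrases containing $y_i$ and $z_i - 1$, that is, $x_l \le y_i < x_{l+1}$ and $x_r \le z_i - 1 < x_{r+1}$. If $l = r$, we store the answer $j = l$ directly with $i$, and queries take $O(1)$ time. Otherwise we apply Lemma~\ref{lem:leaves2} to $l$ and $r$. This gives the node $u = lca$ of the two leaves and the $d' \le d$ nodes $v_1, \dots, v_{d'}$, all at depth at least $\log_d \frac{n}{x_r + x_{r+1} - x_l - x_{l+1}}$, whose subtrees cover leaves $l$ through $r$. We do not store all $d'$ nodes per interval, since that would cost $O(td)$ space. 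Instead we store only $u$ with interval $i$. Property (ii) says that the path from $u$ to each middle $v_a$ consists of unary nodes, so we contract unary paths in $T$: every node gets a pointer to its deepest descendant that has the same set of leaves. This costs $O(s)$ space, because $T$ has $s$ leaves and we add one pointer per node. For $v_1$ and $v_{d'}$ we additionally store, with interval $i$, direct pointers; this is $O(1)$ extra words per interval, so $O(t+s)$ space in total.

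\emph{Answering a query.} Given $i$ and $q$, we first decide which child subtree of $u$ contains $q$. Each node stores the starting positions $x$ of the leftmost leaves of its children, packed in a fusion-node or predecessor structure of $O(d)$ words. A query on it takes $O(\log_w d)$ constant-time operations, and it is built once per node of $T$ (total size $O(s)$), not per interval. If $q$ falls in a middle child, we jump through the unary-path pointer to $v_a$. If $q$ falls in the first or last child, we jump to the stored $v_1$ or $v_{d'}$. In the first case we must also check that $q$ lies beyond the part of $u^l$'s subtree that $v_1$ skips; this holds because $q \ge y_i \ge x_l$ and $v_1$ covers leaf $l$ through the rightmost leaf of $u^l$. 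From $v_a$ we descend normally to the leaf $j$. Each step down costs one edge plus a child selection. To keep the total extra cost at $O(\log_w d)$ rather than $O(\log_w d)$ per level, we choose $d$ (later, $d = w^{\Theta(1)}$) so that child selection is $O(1)$ via fusion nodes; then the additive $O(\log_w d)$ term covers only the first step.

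\emph{Counting edges.} Leaf $j$ has depth $\lfloor \log_d \frac{n}{x_{j+1} - x_j} \rfloor + 2$ by Lemma~\ref{lem:subintervals2}, and $v_a$ has depth at least $\log_d \frac{n}{x_r + x_{r+1} - x_l - x_{l+1}}$. Since $x_{r+1} - x_l \le (z_i - y_i) + (x_{l+1} - x_l) + (x_{r+1} - x_r)$, we would like $x_r + x_{r+1} - x_l - x_{l+1} = O(z_i - y_i)$, which would make the number of traversed edges $\log_d \frac{z_i - y_i}{x_{j+1} - x_j} + O(1)$. This is the main obstacle, because the boundary phrases $l$ and $r$ may be far longer than $[y_i, z_i)$. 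We handle it by cases. If $q$ lies in phrase $l$ or phrase $r$, we detect this with two comparisons against the stored $x_{l+1}$ and $x_r$ and answer in $O(1)$. Otherwise $l < j < r$, and we replace $l, r$ by $l+1, r-1$, i.e.\ the phrases fully inside $[y_i, z_i)$, storing the entry point for those. The difference then satisfies $x_{r-1} + x_r - x_{l+1} - x_{l+2} \le 2(z_i - y_i)$, which gives the bound with an additive constant, and the $\max(0,\cdot)$ accounts for the case where the phrase of $q$ is longer than the interval.
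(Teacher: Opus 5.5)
Your proposal is correct and is essentially the paper's proof. Like the paper, you answer queries that fall in boundary or lone phrases with $O(1)$ stored comparisons, apply Lemma~\ref{lem:leaves2} to the phrases fully contained in $[y_i,z_i)$, store $u$, $v_1$, $v_{d'}$ per interval, choose the child of $u$ with a fusion tree, jump to $v_1$, $v_{d'}$ or to the bottom of the unary chain, and bound the descent using $x_r+x_{r+1}-x_l-x_{l+1}\le 2(z_i-y_i)$. A few small points need tightening:
\begin{itemize}
\item Store the chain-bottom pointers only for children of branching nodes, and the fusion trees only at branching nodes. $T$ may have far more than $O(s)$ nodes, so your ``one pointer per node'' count is not $O(s)$.
\item Keep $u^l$ and $u^r$ per interval (as the paper does) or something equivalent, so that you can recognize the first and last cases.
\item Handle a single fully contained phrase directly, because Lemma~\ref{lem:leaves2} needs $l<r$.
\item Drop the restriction $d=w^{\Theta(1)}$. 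It is unneeded, because the lemma counts edge traversals and leaves their per-edge cost to the main theorem.
\end{itemize}
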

\
The last ingredient in our proof of Theorem~\ref{thm:access on alpha-balanced} is a the following result, proven in Appendix~\ref{app:distance-sensitive}, about linear-space distance-sensitive predecessor data structures.  It improves the query time of Belazzougui et al.'s~\cite{BBV12} and Ehrhardt's~\cite{ehrhardt2017delta} from $O (\log \log \Delta)$ to $O (\log \log_w\Delta)$.

\begin{theorem}\label{thm:distance-sensitive}
    Let $X = \{x_1, \dots, x_s\} \subseteq [1,n]$. 
    We can build a data structure on $S$ taking $O(s)$ words of space and answering predecessor queries on $X$ in $O(\log\log_w\Delta)$ time, where $w\ge \log n$ is the machine word size in bits. 
\end{theorem}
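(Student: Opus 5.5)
We will adapt the standard exponential-search approach to distance-sensitive predecessor, in the style of Belazzougui et al.~\cite{BBV12} and Ehrhardt~\cite{ehrhardt2017delta}. The key change is to replace their binary-search backbone (y-fast tries) with a word-RAM structure that has $O(\log_w)$-type behaviour, namely fusion nodes and exponential search trees. Here $\Delta$ denotes $q - \mathrm{pred}(q) + 1$. The plan is to guess $\Delta$ doubly-exponentially: for $k = 0,1,2,\dots$, set $D_k = w^{2^k}$, and test whether the predecessor of $q$ lies in $[q - D_k, q]$. The test at level $k$ should cost $O(1)$ amortized, or at least $O(k)$ in a way that telescopes. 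Since we stop at the first $k$ with $D_k \ge \Delta$, we have $k = O(\log \log_w \Delta)$. Finding the predecessor inside a window of size $D_k$ must then cost $O(\log \log_w D_k) = O(k)$.

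\textbf{Step 1 (bucketing).} For each $k$, partition $[1,n]$ into aligned buckets of length $D_k$. Store in a hash table (perfect hashing, linear space overall) the non-empty buckets at level $k$, together with the maximum element of $X$ in each. To keep the space at $O(s)$, we store level-$k$ buckets only for $k \le \log\log_w n$. We also charge each element only to levels where it is the maximum of its bucket, or use the sparsification trick of~\cite{ehrhardt2017delta}: store only buckets whose maximum element changes. I expect a careful count to give $O(s)$ words, possibly by grouping $X$ into blocks of $\Theta(\log\log_w n)$ consecutive keys and storing only block representatives in the hash tables.

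\textbf{Step 2 (query).} At level $k$, look up the bucket of $q$ and the preceding bucket. If either is non-empty with an element $\le q$, then the predecessor lies within distance $2D_k$. Both lookups take $O(1)$ time.

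\textbf{Step 3 (local search).} Once the predecessor is known to lie within an interval of length $2D_k$, it remains to find it in $O(k)$ time. For this, each non-empty level-$k$ bucket points to a local predecessor structure over the keys it contains. The universe of that structure has size $D_k = w^{2^k}$, and an exponential-search-tree/fusion-tree structure (or a van Emde Boas recursion that bottoms out in fusion nodes of universe $w^{O(1)}$) answers predecessor in $O(\log \log_w D_k) = O(k)$ time. Building such a structure for every level would blow up the space, so we instead use a single van Emde Boas-like recursion on $[1,n]$. Its top level splits at universe sizes $w^{2^k}$, and its leaves are fusion nodes handling universes of size $w$ in $O(1)$ time. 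A query entering at the level-$k$ bucket then recurses only $O(k)$ times. Linear space comes from hashing only the non-empty subuniverses and from the block-representative trick, with a final $O(1)$-time fusion/word-level search inside a block of $O(\log\log_w n) \le w$ keys.

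The main obstacle will be Step 3 combined with the space bound: making the vEB-style recursion linear-space while still allowing a query to enter at an intermediate level with constant-time overhead. I would first try the approach of Ehrhardt, and substitute $w$-ary fusion nodes at the bottom so that the recursion terminates at universe size $w$ rather than $2$.
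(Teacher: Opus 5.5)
Your search skeleton matches the paper's. Probing the aligned bucket of $q$ and the one before it at scale $w^{2^k}$ is exactly testing whether $q_k$ or $q_k-1$ lies in $\mathit{pref}(X)\cup\{0\}$ (Corollary~\ref{cor:expsearch}), and the stopping level satisfies $k=O(\log\log_w\Delta)$ as in Lemma~\ref{lem:order of k}. (Minor point: to decide whether $q$'s own bucket holds a key $\le q$, you must store the bucket's minimum, not only its maximum.)

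The genuine gap is the $O(s)$-word bound, which is the real content of the theorem and which you leave open. One word-sized hash entry per non-empty bucket per level costs $\Theta(s\log\log_w(n/s))$ words when the keys are evenly spaced. Each key is then alone in its bucket, and is that bucket's maximum, at every level with $w^{2^k}<n/s$, so charging keys only where they are maxima saves nothing.

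The block-representative repair destroys distance sensitivity. Suppose a block's representatives are its minimum and maximum, and take a block consisting of $1$, $n$, and a cluster of keys around $q\approx n/2$ that contains $q-1$. Then $\Delta=1$, but every representative is $\Theta(n)$ away, so the probes run through all $\Theta(\log\log_w n)$ levels before reaching the right block. Any $O(1)$ representatives per block fail in the same way.

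A hashed van Emde Boas recursion in Step~3 does not rescue this. It is not linear-space in the worst case either, because summaries re-store high parts of keys level after level. That is why y-fast tries need indirection, and indirection by blocks of consecutive keys hits the same obstacle. As it stands, your construction gives $O(s\log\log_w n)$ words.

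The idea you are missing is that the per-level information need not be a full word, and need not be reliable on non-keys.

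The paper first builds a $w$-ary z-fast trie $\mathcal Z$ on the base-$w$ digit strings of $X$. It has one handle per edge, at the 2-fattest position of the edge's depth range, and fusion trees on children, for $O(s)$ words in total.

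For each key $x\in X$ and each of the $O(\log H)=O(\log w)$ levels $k$, it then stores only $\mathcal H(x_k)$, the length of the longest handle prefixing $x_k$. This is an $O(\log w)$-bit value, kept in a perfect-hash structure that does not store the keys. The total is $O(s\log^2 w)\subseteq O(sw)$ bits, i.e., $O(s)$ words.

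Garbage answers on non-keys are harmless: each answer is verified in $O(1)$ time against actual prefixes retrieved through $\mathcal Z$ and $\mathcal C_e$ (Lemma~\ref{lem:using map H}).

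Once a probe succeeds at level $k$, the local search is not a vEB descent. It is a fat binary search over handle lengths in $[H-2^k+2,H-1]$, which takes $O(k)$ steps because no number in that range has more than $k$ trailing zeros (Lemma~\ref{lem:z-fast}).
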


Now we are finally in position to prove Theorem~\ref{thm:access on alpha-balanced}.

\begin{proof}[Proof of Theorem~\ref{thm:access on alpha-balanced}]
Let $x_1,\ldots,x_t$ be the starting positions in $S$ of the phrases, with $x_{t+1}=n+1$. Further, let $[y_i,z_i)$ be the source of the phrase $[x_i,x_{i+1})$ (if the phrase is explicit, this interval is omitted). We build the tree $T$ of Lemma~\ref{lem:subintervals2} on the values $x_i$, and the additional data structure of Lemma~\ref{lem:main2} on top of $T$, with the intervals $[y_i,z_i)$. Further, we build a distance-sensitive predecessor data structure on the values $x_1,\ldots,x_t$ according to Theorem~\ref{thm:distance-sensitive}.

The procedure to access $S[q]$ is then as follows:
\begin{enumerate}
    \item Find the predecessor $x_i$ of $q$ with the data structure of Theorem~\ref{thm:distance-sensitive}.
    \item If the phrase $[x_i,x_{i+1})$ is the explicit string $B_i$, return $B_i[q-x_i+1]$.
    \item Update $q \gets q - x_i + y_i$, which belongs to $[y_i,z_i)$.
    \item Given $i$ and $q$, use Lemma~\ref{lem:main2} to find $j$ such that $q$ is in $[x_j,x_{j+1})$.
    \item Update $i \gets j$ and return to point 2.
\end{enumerate}

The number of iterations of points 2--5 is at most $h_q$.
By Lemma~\ref{lem:main2}, the number of edges of $T$ traversed in point 4 is at most $\max\left(0,\log_d \frac{z_i-y_i}{x_{j+1}-x_j}\right) + O(1) = \max\left(0,\log_d \frac{x_{i+1}-x_i}{x_{j+1}-x_j}\right) + O(1)$. 
Furthermore, Lemma~\ref{lem:main2} spends an additional time $O(\log_wd)$ on top of that.
We now show that the sum of traversed edges telescopes along the iterations. 

Let $x_{i_0} = x_i$ be the value of $x_i$ obtained in point 1, and given $x_{i_k} = x_i$ in point 4, let $x_{i_{k+1}} = x_j$. Then the number of edges traversed in $T$ along the $h_q$ steps of the iteration is at most
$$ O(h_q) + \sum_{k=0}^{h_q-1} \max\left(0,\log_d\frac{x_{i_k+1}-x_{i_k}}{x_{i_{k+1}+1}-x_{i_{k+1}}}\right).$$
The sum would telescope, except that negative logarithms (i.e., when the source is longer than the phrase) introduce zeros in the sequence. However, since the parsing is $\alpha$-contracting, it follows that $\log_d \frac{x_{i_k+1}-x_{i_k}}{x_{i_{k+1}+1}-x_{i_{k+1}}} \ge -\log_d \alpha$. We can then upper bound all terms $\max\left(0,\log_d\frac{x_{i_k+1}-x_{i_k}}{x_{i_{k+1}+1}-x_{i_{k+1}}}\right)$ $\le \log_d \alpha + \log_d\frac{x_{i_k+1}-x_{i_k}}{x_{i_{k+1}+1}-x_{i_{k+1}}}$, and upper bound the summation  by
$$ O(h_q) + h_q \log_d \alpha + \sum_{k=0}^{h_q-1} \log_d \frac{x_{i_k+1}-x_{i_k}}{x_{i_{k+1}+1}-x_{i_{k+1}}},$$
which now does telescope to $\log_d (x_{i_0+1}-x_{i_0}) + O(h_q(1+\log_d\alpha))$.

Let $x_s = x_{i_0}$ and $l_q = x_{s+1}-x_s$ be the length of the block where $q$ lies.
We have just shown that the sum adds up to $\log_d(x_{s+1}-x_s) + O(h_q(1+\log_d\alpha)) = O(h_q(1+\log_d\alpha)+\log_d l_q)$. Each of those edge traversals can be done in constant time using perfect hashing to store the children of the nodes.
The cost of point 1 is $O(\log\log_w (q - x_s)) \subseteq O(\log\log_w l_q)$  
Taking into account the additional time $O(\log_wd)$ of each application of Lemma~\ref{lem:main2}, the total cost becomes $O(h_q(1+\log_d\alpha+\log_wd) + \log_d l_q + \log\log_w l_q)$. Choosing the arity of $T$ as $d=w$ and remembering that we assume $\alpha \le w^{O(1)}$, this cost becomes $O(h_q + \log_w l_q)$.

Because of the parsing invariants, a non-explicit phrase must occur elsewhere in $S$; therefore we can bound $l_q \le \ell_q$. This proves the claimed time complexity.

Both the predecessor data structure and the extra structures of Lemma~\ref{lem:main2} use $O(t)$ space. The tree $T$, however, has $t$ leaves but could have more internal nodes. To achieve the space bound of $O(t)$, we compact $T$ to obtain $T'$, replacing by a single edge every maximal path in $T$ with nodes having exactly 1 child each.
(Note $T'$ is still a $d$-ary search tree storing on each of its edges the smallest integer $x_i$ below it to guide search.)
This process may eliminate the nodes $v_1$ or $v_{d'}$ of Lemma~\ref{lem:leaves2} but, if $r'$ and $r''$ are their lowest ancestors in $T$ with 2 or more children each, then we can still start from $r'$ and $r''$ in $T'$ and find the $j$th leaf of $T'$ traversing the same amount of edges (or less). Note that, in $T'$, we will not have the problem of handling unary paths we have discussed above.
\end{proof}

In future work we will explore how Theorem~\ref{thm:access on alpha-balanced} can be applied.  For example, in Appendix~\ref{app:alpha bidirectional} we prove the following lemma about converting bidirectional parses into somewhat larger contracting bidirectional parses.

\begin{lemma}
\label{lem:alpha bidirectional}
Let $S[1\dd n]$ have a bidirectional parse of size $b$, and denote with $h_q$ the height of $S[q]$ in such a parse. For any integer $\alpha > 1$, we can build an $\alpha$-contracting bidirectional parse of size $O(b \log_{\alpha}(n/b))$ for $S$ such that the height $h'_q$ of any character $S[q]$ in such a parse is no larger than that in the original parse: $h'_q \le h_q$.
\end{lemma}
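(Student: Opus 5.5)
The plan is to refine the given bidirectional parse by splitting phrases, and never to move any source. Splitting a phrase into pieces, with each piece copying the corresponding piece of the original source, preserves decodability. It also preserves heights: the referencing chain of every position is unchanged, since $f$ maps each position to exactly the same place as before, and explicit characters stay explicit. So $h'_q = h_q$. The whole task is therefore to choose the refinement so that the result is $\alpha$-contracting and has only $O(b\log_\alpha(n/b))$ phrases.

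For the partition, I would use an $\alpha$-ary hierarchy of cut points, similar to a block tree with arity $\alpha$. Each original phrase $B_i$ of length $m_i$ is cut into geometrically growing pieces towards its interior: pieces of length $1,\alpha,\alpha^2,\dots$ from the left end and symmetrically from the right end, up to the middle. Positions at distance about $\alpha^k$ from a phrase boundary then lie in pieces of length $\Theta(\alpha^k)$. This gives $O(\log_\alpha m_i)$ pieces per phrase, and by concavity $\sum_i \log_\alpha m_i \le b\log_\alpha(n/b)$, which yields the size bound.

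The contracting property is harder, because the sources must also be cut. A source piece of length $L$ copies a piece lying at distance $\ge L/\alpha$ or so from its phrase's boundary. However, the source region may overlap target pieces that are near a boundary of some other phrase (hence short) or far from one (hence long). The danger is a source piece that is short while overlapping a long target piece. The remedy is to further cut each source's copy so that the piece structure inside source $[p_i,p_i+m_i)$ is at least as fine as the target structure it overlaps. Concretely, I would pull back the cut points of the target partition lying inside each source onto the corresponding phrase, and iterate. Since only a phrase's own geometric pieces matter, I would argue instead that we can pick the pieces to be the $\alpha$-adic aligned blocks: define the final partition as the coarsest partition into aligned $\alpha$-adic intervals (blocks $[j\alpha^k,(j+1)\alpha^k)$) such that each block is contained in a single original phrase.

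Each phrase is then covered by $O(\alpha\log_\alpha m_i)$ aligned blocks, and grouping runs of equal-sized siblings gives $O(\log_\alpha m_i)$ per phrase. A piece of length $L$ is at distance $\Omega(L)$ from its phrase's boundary, and its copy is an arbitrary, non-aligned interval of length $L$. Any aligned block it overlaps either has size $\le \alpha L$, or else it contains the copy's window. In the latter case I would subdivide the piece further so that its copy aligns, at an extra cost charged again geometrically.

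I expect this last step to be the main obstacle: guaranteeing that a source never meets a target piece more than $\alpha$ times its length, without a cascading blowup of cuts. If the aligned-block approach does not close, I would fall back on a top-down, LZ-style inductive argument. It would process the levels $k=\log_\alpha n$ down to $0$, showing that at each level only $O(b)$ new cuts are forced, which gives $O(b\log_\alpha(n/b))$ in total.
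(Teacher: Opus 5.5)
Your partition and your phrase count are fine; they are essentially the paper's concentric exponential parse around phrase boundaries. But there is a genuine gap at exactly the step you flag, and it comes from your decision never to move sources.

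With sources fixed, consider the length-$1$ piece at the left end of $B_i$. Its copy is the position $p_i$, which may lie near the middle of some long phrase $B_k$, inside one of its interior pieces of length far more than $\alpha$. Nothing in the original parse ties $p_i$ to any phrase boundary. Your $\alpha$-adic remedy does not help. Subdividing the piece, or aligning its copy to the grid, only shortens the source while the long block containing it stays the same, so the ratio gets worse.

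The only repair compatible with fixed sources is to cut $B_k$ geometrically around $p_i$. But the new short pieces of $B_k$ copy to arbitrary interior positions of $B_k$'s source, which must then be cut in turn, and so on along whole referencing chains. Nothing bounds this cascade by $O(b)$ new cuts per level, since its extent is governed by the heights. Your fallback just asserts that bound.

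The missing idea is to \emph{reassign} sources by short-cutting referencing chains. This is exactly what the conclusion $h'_q\le h_q$ (rather than $h'_q=h_q$) permits, and it is what the paper does.

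\begin{itemize}
\item Let $\Gamma$ contain both endpoints of every original phrase; this is an attractor of size $O(b)$. Build the geometric parse around the points of $\Gamma$.
\item For each new non-explicit phrase $S[i\dd j]$, follow the original sources $S[i'\dd j'], S[i''\dd j''],\ldots$ until the current copy contains a point of $\Gamma$. Take that copy as the new source.
\item This terminates. A copy avoiding $\Gamma$ lies strictly inside one original phrase, which cannot be explicit since explicit phrases are single characters and hence in $\Gamma$. So the copy can be pushed into that phrase's source, and the original parse decodes.
\item Each new reference jumps one or more steps along an old referencing chain. Hence no cycles arise and $h'_q\le h_q$.
\end{itemize}

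Contraction is then a local argument. A source of length in $[\alpha^t,\alpha^{t+1})$ containing $j\in\Gamma$ extends fewer than $\alpha^{t+1}$ positions on each side of $j$. On each side of $j$, the first piece of length $\alpha^{t+2}$ or more begins at distance at least $\sum_{z=0}^{t+1}\alpha^z>\alpha^{t+1}$. So every phrase the source overlaps has length at most about $\alpha^{t+1}$, i.e.\ within a factor of about $\alpha$ of the source's length. The leftover piece in the middle of each gap needs a little care.
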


Setting $\alpha = w$ in Lemma~\ref{lem:alpha bidirectional}, where $w$ is the word size, and combining with Theorem~\ref{thm:access on alpha-balanced}, we obtain the following result.

\begin{corollary} \label{cor:optimal access}
Let $S[1\dd n]$ have a bidirectional parse of size $b$. On a machine word of word size $w \ge \log n$ bits, we can store a data structure of $O(b \log_w(n/b))$ words supporting the extraction of any $S[q]$ in time $O(h_q + \log_w \ell_q)$, where $\ell_q$ is the length of the longest repeated substring of $S$ containing position $q$ and $h_q$ the height of $S[q]$ in the parse. 
\end{corollary}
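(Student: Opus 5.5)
The corollary is a direct composition of Lemma~\ref{lem:alpha bidirectional} and Theorem~\ref{thm:access on alpha-balanced}, with $\alpha=w$.

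First, apply Lemma~\ref{lem:alpha bidirectional} with $\alpha = w$. This is a legal choice because $w \ge \log n$, so $w > 1$ once $n>2$; tiny $n$ are trivial. It turns the given bidirectional parse of size $b$ into a $w$-contracting bidirectional parse with
\[
t = O\!\left(b\log_w(n/b)\right)
\]
phrases. In the new parse every character satisfies $h'_q \le h_q$. The new parse still decodes $S$, since every referencing chain in it is finite: its length is $h'_q \le h_q < \infty$.

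Second, apply Theorem~\ref{thm:access on alpha-balanced} to this parse. The hypothesis $\alpha \le w^{O(1)}$ holds trivially for $\alpha = w$. The theorem gives a structure of $O(t) = O(b\log_w(n/b))$ words that accesses $S[q]$ in time
\[
O\!\left(h'_q + \log_w \ell_q\right) \subseteq O\!\left(h_q + \log_w \ell_q\right).
\]
Here $\ell_q$ depends only on $S$, not on the parse.

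The one point to check is that Theorem~\ref{thm:access on alpha-balanced} uses the bound $l_q \le \ell_q$, where $l_q$ is the length of the phrase containing $q$. This bound comes from non-explicit phrases having a source elsewhere. For explicit phrases it fails in general, but such phrases have length $O(\log_\sigma n)$ and are resolved in $O(1)$ time at step 2 of the access procedure. The predecessor query there costs $O(\log\log_w l_q)$, which is $O(1)$ when $l_q \le w$. So explicit phrases contribute only an additive constant, and $\log_w \ell_q$ is read as $\max(1,\cdot)$ by the paper's convention.

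Apart from this, there is no real obstacle: all the work is in the two cited results. The argument is mainly bookkeeping to confirm that their parameters match.
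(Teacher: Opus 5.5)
Your proposal is correct and is exactly the paper's argument: instantiate Lemma~\ref{lem:alpha bidirectional} with $\alpha = w$, then feed the resulting $w$-contracting parse of size $O(b\log_w(n/b))$ into Theorem~\ref{thm:access on alpha-balanced}, using $h'_q \le h_q$. Your extra bookkeeping is sound; it covers the legality of $\alpha = w$, the treatment of explicit phrases, and the $\max(1,\cdot)$ convention for $\log_w \ell_q$, all of which the paper leaves implicit.
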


\appendix

\section{Proof of Theorem~\ref{thm:local_balancing}}
\label{app:contracting}

We show how to transform an RLSLP into a context-free grammar (with run-length rules, which we call a ``run-length grammar'') that generates the same string and is {\em contracting}, that is, if $B$ is a child of $A$, then $|\exp(B)|\le|\exp(A)|/2$. This property implies local balance. That contracting run-length grammar is not yet an RLSLP because some right-hand sides have length more than two, but still bounded by a constant. We finish by transforming the contracting run-length grammar again into an RLSLP by putting it into Chomsky Normal Form. This loses the contracting property but retains local balance (with an increased constant). 

We borrow some definitions and results from Ganardi's work on constructing contracting grammarrs \cite{GanardiContracting}, and show that (almost) the same ideas work to produce contracting run-length grammars.

Consider a (run-length) grammar $G = (V,\Sigma,R,S)$. We call the {\em variables}, $V \cup \Sigma$, the nonterminals and terminals of $G$. A variable $B \in V$ is a {\em heavy child} of $A \in V$ with rule $A \rightarrow u$ if $B$ appears in $u$ and $|\exp(B)|>|\exp(A)|/2$. 
A rule $A \rightarrow u$ is \emph{contracting} if $u$ contains no heavy variables. Naturally, run-length variables and terminal symbols are considered contracting. If all rules in $G$ are contracting, we call $G$ contracting.

A {\em labeled tree} $T = (V,E,\eta)$ is a rooted tree where each edge $e \in E$ is labeled by a string $\eta(e)$. 
A prefix in $T$ is the concatenation of the labels in any path starting from the root.

The {\em DAG} of $G = (V,\Sigma,R,S)$ is obtained from its grammar tree, by identifying every leaf labeled with a nonterminal $A$ with the only internal node labeled $A$, and all the leaves labeled with a terminal symbol with a single node. The DAG contains exactly one node labeled with each variable; we will identify nodes and labels.

The {\em heavy forest} $H = (V,E_H)$ of $G$ contains all edges $(A,B)$ where
$B$ is a heavy child of $A$, and it is a subgraph of the DAG of $G$. Notice that the edges in $H$ point towards the roots, that is, if $(A,B) \in E_H$ then $A$ is a child of $B$ in $H$. The ending points of those paths in $E_H$ (i.e., their lowest nodes in the DAG) are called {\em roots}.
Isolated nodes, in particular all terminal and run-length variables of $G$, are also roots.

Let's now assume $G$ is an (RL)SLP. We define two labeling functions: The {\em left label} $\lambda(e)$ of an edge $e = (A,B) \in E_H$ is the left light child of $A$ (if it exists, $\varepsilon$ otherwise) and the {\em right label} $\rho(e)$ of $e$ is the right light child of $A$ (if it exists, $\varepsilon$ otherwise). The connected components of $(H,\lambda)$ and $(H,\rho)$ are called the {\em left labeled and right labeled heavy trees}, which can be computed in linear time from $G$. If $B$ is the root of a heavy tree containing a variable $A$ we can factorize $A$ into the reversed left labeling from $A$ to its root $B$ in $H$,  $B$, and the right labeling of the path from $B$ to $A$. In that way one can redefine every variable using SLPs that define all prefixes in the left labeled and the right labeled heavy trees.

\begin{example}Let $S=abracad(abra)^7cabra$. Let $A,B,C,D,R$ be the variables generating $a,b,c,d$, and $r$, respectively. Below, we exhibit an  RLSLP that generates $S$.
\begin{center}
\begin{tabular}{lclclcl}
Rule & \hspace{1cm} & $\exp(\cdot)$ & \hspace{1cm} & $|\exp(\cdot)|$ &\hspace{1cm} & Heavy child \\
& & & & \\
    $A_0 \rightarrow A_1A_4$ & & $abracad(abra)^7cabra$ & & $40$ & & $A_1$ \\
    $A_1 \rightarrow A_2A_3$ & & $abracad(abra)^7$ & & $35$ & & $A_3$\\
    $A_2 \rightarrow A_5A_6$ & & $abracad$ & & $7$ & & $A_5$\\
    $A_3 \rightarrow A_5^7$  & & $(abra)^7$ & & $28$ & & None\\
    $A_4 \rightarrow CA_5$   & & $cabra$ & & $5$ & & $A_5$\\
    $A_5 \rightarrow A_7A_8$   & & $abra$ & & $4$ & & None\\
    $A_6 \rightarrow A_9D$   & & $cad$ & & $3$ & & $A_9$\\
    $A_7 \rightarrow AB$   & & $ab$ & & $2$ & & None\\
    $A_8 \rightarrow RA$   & & $ra$ & & $2$ & & None\\
    $A_9 \rightarrow CA$   & & $ca$ & & $2$ & & None\\
\end{tabular}
\end{center}
The heavy forest of this RLSLP is shown in Figure~\ref{fig:contracting}. Observe how the labels in the edges have $O(1)$ length, and the number of edges is no more than the number of variables of the RLSLP. 
\end{example}

\begin{figure}
\center
\begin{tikzpicture}
    \node[circle, draw] (A0) at (0,0) {$A_0$};
    \node[circle, draw] (A1) at (-1,-2) {$A_1$};
    \node[circle, draw] (A2) at (2,-2) {$A_2$};
    \node[circle, thick, draw] (A3) at (0,-4) {$A_3$};
    \node[circle, draw] (A4) at (4,-2) {$A_4$};
    \node[circle, thick, draw] (A5) at (3,-4) {$A_5$};
    \node[circle, draw] (A6) at (7,-2) {$A_6$};
    \node[circle, thick, draw] (A9) at (6,-4) {$A_9$};
    \draw[->] (A0) -- node[left]{[$\varepsilon,A_4$]}  (A1);
    \draw[->] (A1) -- node[left]{[$A_2,\varepsilon$]} (A3);
    \draw[->] (A2) -- node[left]{[$\varepsilon,A_6$]} (A5);
    \draw[->] (A4) -- node[right]{[$C,\varepsilon$]}(A5);
    \draw[->] (A6) -- node[right]{[$\varepsilon,D$]}(A9);
\end{tikzpicture}\caption{Heavy forest of an RLSLP generating the string $S=abracad(abra)^7cabra$ (we omit trees containing only one node). The edges are labeled with the left and right labels corresponding to each non-root variable. We use bold circles to highlight root nodes. }\label{fig:contracting}
\end{figure}
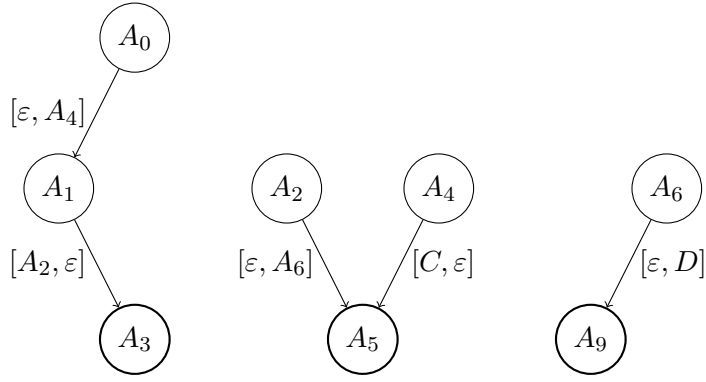

\begin{theorem}[{\cite[Thm. 6]{GanardiContracting}}]\label{thm:H_SLP_contracting} Given a labeled tree $T$ with $|T|$ edges and labels of length $\le t$, one can compute
 in linear time a contracting grammar with $O(|T|)$ variables and right-hand sides of length $O(t)$
 defining all prefixes in $T$.
\end{theorem}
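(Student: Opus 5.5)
This is Ganardi's Theorem~6, quoted verbatim, so one option is simply to cite it. If I were to reprove it, I would proceed as follows. Throughout, $|T|$ is the number of edges and every label $\eta(e)$ has length at most $t$. The aim is a grammar with one variable $X_v$ per node $v$ of $T$, with $\exp(X_v)$ equal to the prefix $P_v$ spelled on the root-to-$v$ path. Each right-hand side should have length $O(t)$ and contain no heavy child, meaning no variable expanding to more than half of the left-hand side.

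The naive choice $X_v \to X_{\mathrm{parent}(v)}\,\eta(e)$ fails, since $X_{\mathrm{parent}(v)}$ is almost always heavy. Instead, I would weight each node $v$ by $|P_v|$ and build a \emph{weight-based centroid decomposition} of root-to-leaf paths. For node $v$ at string depth $m=|P_v|$, let $j$ be the largest integer with $2^j \le m/2$, and let $u$ be the deepest ancestor of $v$ with $|P_u|\le 2^j$. Then
\[
P_v = P_u \cdot \eta(\text{edge after } u)\cdot Q,
\]
where $Q$ is the label of the path from just below $u$ down to $v$. The pieces $P_u$ and, where possible, $Q$ are at most about half of $m$. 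The middle edge label has length at most $t$ and is written explicitly with terminals; to stay contracting, it could also be split into a balanced grammar of $O(t)$ size.

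The string $Q$ is a path label that is not a root prefix. So I would also maintain variables for \emph{suffixes of prefixes}: the labels of tree paths $u\to v$ where $u$ is an ancestor on a fixed dyadic level. These are defined symmetrically, by splitting on the dyadic level from the other side. The main obstacle is the size bound. A naive dyadic scheme gives $O(|T|\log)$ variables, so reaching $O(|T|)$ needs a charging argument. I would try Ganardi's trick: group nodes by the level $\lfloor\log|P_v|\rfloor$, and share path-suffix variables among all descendants that cross the same dyadic boundary. Each edge is then charged $O(1)$ variables, amortized via a heavy-path decomposition of $T$, which bounds how many distinct boundary-crossing suffixes arise. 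Linear time follows because each variable is created in $O(1)$ amortized steps by a DFS over $T$ that keeps string depths.
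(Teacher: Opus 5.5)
Citing Ganardi's Theorem~6 is exactly what the paper does, since it gives no proof of its own, so your first option matches the paper. The reproof you sketch, however, has a genuine gap and does not establish the statement.

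First, your top-level rule is not contracting. Since $j$ is the largest integer with $2^j\le m/2$, you only know $2^j>m/4$. So $|Q|=m-|P_{u'}|$, with $u'$ the child of $u$ toward $v$, can be close to $3m/4$. For example, on a path with one-character labels and $m=15$ you get $j=2$, $|P_u|=4$ and $|Q|=10>15/2$. The rule $X_v\to X_u\,\eta(e)\,Q$ therefore contains a heavy variable, however $Q$ is defined afterwards. This part is repairable. Cut at both thresholds $2^j$ and $2^{j+1}$: this yields a prefix variable of weight at most $m/2$, a middle path label between the two crossing points of weight less than $2^j\le m/2$, and a tail of weight less than $m/2$ that ends at $v$.

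Second, and more seriously, the $O(|T|)$ bound, which is the only nontrivial content of the theorem, is never argued. In your scheme the variables are path labels between two nodes. So $O(|T|)$ variables requires that only $O(1)$ distinct variables end at each node on average. The tails above end at $v$ and are used only in $v$'s own derivation. Recursing on them with the same halving creates a fresh variable ending at $v$ at every dyadic level, giving $\Theta(\log|P_v|)$ variables per node and $O(|T|\log n)$ in total, as you yourself note.

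Your proposed remedies do not touch this. A path-suffix variable from a crossing point $u$ down to $v$ is determined by $v$, so it cannot be shared among the descendants that cross the same boundary; only the middle pieces can be shared. A heavy-path decomposition bounds the number of light edges on a root-to-leaf path, not the number of variables ending at a node, so it gives no charging scheme for these tails. Without a mechanism that keeps the per-node count constant, neither the size bound nor the linear running time follows.

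Separately, expanding each edge label into its own balanced grammar would add $\Theta(t)$ variables per edge, again exceeding $O(|T|)$ when $t$ is not constant. It is also unnecessary, because label symbols are terminals of the constructed grammar and can be written explicitly in the $O(t)$-length right-hand sides the statement allows.
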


The grammars obtained using Ganardi's procedure are not necessarily in Chomsky Normal Form. Their procedure actually allows that rules in its input grammars are of different lengths. This allows accommodating our run-length rules. Run-length variables, if unfolded, would be roots in $H$ under Ganardi's original definition of contracting variables. Hence, the labeled trees $H_L=(H,\lambda)$ and $H_R=(H,\rho)$ have $O(|G|)$ edges with labels of length at most 1. We obtain the following corollary.

\begin{corollary}\label{cor:contracting_heavy_tree}Let $G$ be an RLSLP. One can construct in $O(|G|)$ time a contracting run-length grammar $G_L$  of size $O(|G|)$ and right-hand sides of length $O(1)$ defining all prefixes in $H_L$.\end{corollary}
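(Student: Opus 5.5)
The plan is to apply Theorem~\ref{thm:H_SLP_contracting} directly to the left labeled heavy forest $H_L=(H,\lambda)$ of $G$. First I check its hypotheses. The DAG of $G$ has one node per variable. Each non-root variable $A$ has exactly one heavy child, since two children cannot both exceed $|\exp(A)|/2$. So $H$ is a forest with at most $|V|\le|G|$ edges, computable in $O(|G|)$ time: we compute all expansion lengths bottom-up in the DAG, then mark for each binary rule $A\to BC$ whether $B$ or $C$ is heavy.

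Run-length variables $A\to B^k$ with $k\ge 2$ have $|\exp(B)|\le|\exp(A)|/2$, so $B$ is never heavy. We therefore declare these variables contracting and roots, with no outgoing heavy edge. Terminals are likewise roots. Consequently every edge $(A,B)\in E_H$ comes from a binary rule $A\to BC$ or $A\to CB$. Its left label $\lambda(e)$ is either $\varepsilon$ or the single light child $C$, so every label has length at most $1$.

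Each connected component of $H_L$ is a labeled tree rooted at a root of $H$. I merge all components under a virtual super-root joined by $\varepsilon$-labeled edges, or else apply the theorem componentwise and take the disjoint union. Theorem~\ref{thm:H_SLP_contracting} with $t=1$ then gives, in $O(|H|)=O(|G|)$ time, a contracting grammar with $O(|G|)$ variables and right-hand sides of length $O(1)$ defining every prefix of $H_L$.

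The one step needing care, and the likely obstacle, is the alphabet of that grammar. Its ``terminals'' are the label symbols, which are variables of $G$, including run-length variables and light children. These must be kept as symbols that the resulting grammar $G_L$ refers to; this is legitimate because $G_L$ defines prefixes as strings over $V\cup\Sigma$. Each label $C$ on edge $(A,B)$ satisfies $|\exp(C)|\le|\exp(A)|/2$, but contraction of $G_L$ is measured by expansion lengths over $\Sigma$, not over label symbols. I would argue that Ganardi's construction is weight-agnostic: it accepts arbitrary positive weights on label symbols, here $|\exp(C)|$, and contracting is defined with respect to those weights. Under that reading the output remains contracting when the label symbols are later expanded. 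The prefix nonterminals produced are new variables of $G_L$; the variables of $G$ appearing on labels stay as references, and run-length rules are carried along unchanged as contracting rules. This yields $G_L$ as claimed.
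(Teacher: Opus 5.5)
Your proposal is correct and matches the paper's proof. Both apply Ganardi's Theorem~\ref{thm:H_SLP_contracting} to the connected components of $H_L$, using that $H_L$ has at most $|G|$ edges and labels of length at most $1$, with run-length variables treated as roots. Your added remark, that contraction must be measured with the weights $|\exp(C)|$ of the label symbols, makes explicit a point the paper's one-line argument leaves implicit.
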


\begin{proof}
We apply Theorem \ref{thm:H_SLP_contracting} to all connected components of $H_L$.  As the number of edges in $H_L$ is no more than $|G|$, and the length of the labels  is at most $1$, the claim holds. 
\end{proof}

An analogous result holds for $H_R$. Now we are ready to prove Theorem \ref{thm:local_balancing}.

\begin{proof}[Proof of Theorem \ref{thm:local_balancing}]
Let $G$ be an RLSLP generating a string $S$ with size $g_{rl}$. We apply the construction of Corollary \ref{cor:contracting_heavy_tree}. For each non-root variable $A$ of a connected component of $H$, let $B$ be its root. Let $X_L\in G_L$ and $X_R\in G_R$ be variables generating the reversed left-labeled prefix and right-labeled prefix of $A$. We observe that $A$ derives the same string as $X_L\cdot B\cdot X_R$. To ensure $A$ is locally contracting, we replace the right-hand side of the rule of $A$ by the concatenation of the right-hand sides of the rules of $X_L, B,$ and $X_R$. Because $X_L,B,$ and $X_R$ are contracting, the resulting rule for $A$ is also contracting. Observe that the length of the new rule for $A$ is $O(1)$. The root variables (which includes run-length variables of $G$) are not modified. The size of this equivalent run-length grammar $G'$ is $O(g_{rl})$, and all its rules are contracting, hence $G'$ is contracting. Because rules are constant size, we can apply  standard techniques to transform the non-run-length rules of $G'$ into Chomsky Normal Form. This incurs only a constant increase in height, and so retains local balance.
\end{proof}

\section{Proof of Theorem~\ref{thm:distance-sensitive}}
\label{app:distance-sensitive}

\begin{definition}
    Let $X = \{x_1, \dots, x_s\} \subseteq [1,n]$ and let $x_0=0$. 
    For any $q\in [1,n]$, the \emph{predecessor} $q^-$ of $q$ in $X \cup\{0\}$ is defined as $q^- = \max\{x_i\ :\ i\in[0,s]\ \wedge\ x_i\le q\}$.
    The \emph{strict predecessor} of $q$ in $X\cup\{0\}$ is $\max\{x_i\ :\ i\in[0,s]\ \wedge\ x_i< q\}$.
    We denote as $\Delta = q-q^-$ the distance to $q$ from its predecessor.
    A predecessor query on $X$ finds such $q^-\in X\cup \{0\}$ given $q$ as input.
\end{definition}

Intuitively, to prove Theorem \ref{thm:distance-sensitive} we will build a $w$-ary z-fast trie (that is, a z-fast trie of arity $w$ endowed with a fusion tree on each explicit node supporting constant-time predecessor queries on the outgoing edges of the node) on the base-$w$ representations of the integers $x_1, \dots, x_s$. 
Classic \emph{fat binary search} on this trie yields predecessor queries in time $O(\log\log_wn)$. In order to speed this up to  $O(\log\log_w\Delta)$, we generalize the exponential search of \cite{BBV12,ehrhardt2017delta} (designed on binary z-fast tries) to $w$-ary z-fast tries.
The exponential search phase will precede classic fat binary search and will find, in $k\in O(\log\log_w\Delta)$ steps, a trie node at height (distance from the leaves) $2^k \in O(\log_w\Delta)$, hence fat binary search from that node will find the predecessor of $q$ in $O(\log 2^k) \subseteq O(\log\log_w\Delta)$ time.

In the following, for convenience we switch to base $w$ when dealing with integers (this will avoid an excessive use of ceiling operations and will simplify notation), where $w$ is the machine word size in bits. More precisely, we work with integers from $[0,w^H)$ ($H$ digits in base $w$), for the smallest integer $H$ being a power of two and satisfying $H \ge \log_w 2^w$. Observe that any $q\in [0,w^H)$ still fits in $O(1)$ memory words, so this assumption is without loss of generality. We will moreover use string notation on the integers from $[0,w^H)$ (as well as standard integer notation).
Given $q \in [0,w^H)$ and $i\in [1,H]$,  $q[i]$ denotes the $i$-th most significant digit of $q$ written in base $w$. In other words, $q$ is treated as a string of length $H$ over alphabet $[0,w)$. 

Given $X = \{x_1, \dots, x_s\} \subseteq [1,w^H)$, we denote by $\mathit{pref}(X)$ the set containing all prefixes of strings in $X$.

Consider the $w$-ary trie containing $x_1, \dots, x_s$ (seen as strings of length $H$ over alphabet $[0,w)$). Letting $\lambda(u) \in \mathit{pref}(X)$ denote the label from the root to trie node $u$, $u$ is \emph{explicit} if and only if either $|\lambda(u)|=H$ or if $\lambda(u)\cdot a$ and $\lambda(u)\cdot b$ belong to $\mathit{pref}(X)$ for two distinct $a, b \in [0,w)$. 
We path-compress this trie and denote by $Z$ the resulting trie; nodes of $Z$ correspond to the explicit nodes of the original trie and edges of $Z$ are labeled with sequences of digits from $[0,w)$.

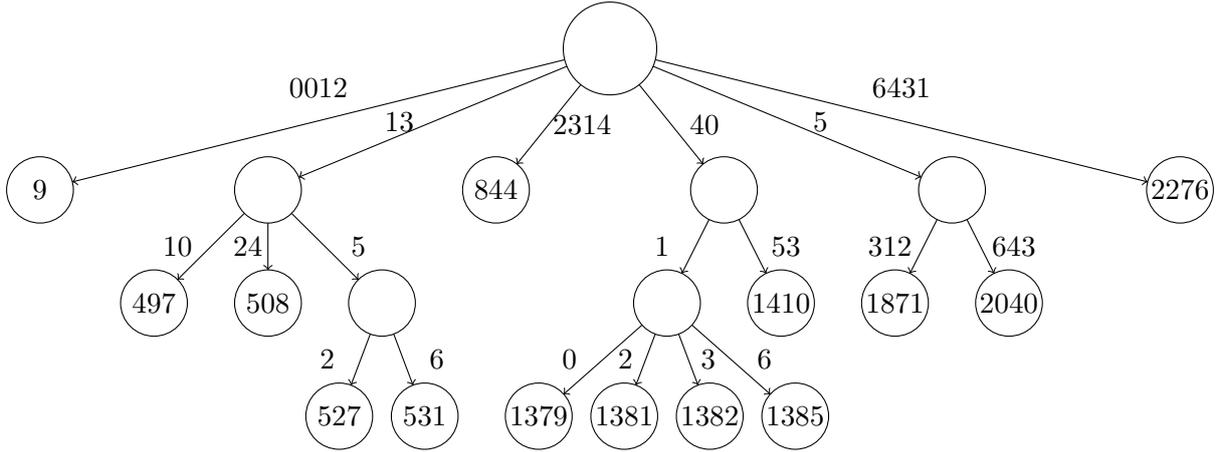
\begin{figure}[t]
\center
\begin{tikzpicture}[scale=0.75,minimum size=25pt,inner sep=0pt, outer sep=0pt]
    \node[circle, draw, minimum size=35pt] (root) at (0,0.5) {};
    \node[circle, draw] (0012) at (-10,-2) {$9$};
    \node[circle, draw] (13) at (-6,-2) {$ $};
    \node[circle, draw] (2314) at (-2,-2) {$844$};
    \node[circle, draw] (40) at (2,-2) {$ $};
    \node[circle, draw] (5) at (6,-2) {$ $};
    \node[circle, draw] (6431) at (10,-2) {$2276$};

    \node[circle, draw] (1310) at (-8,-4) {$497$};
    \node[circle, draw] (1324) at (-6,-4) {$508$};
    \node[circle, draw] (135) at (-4,-4) {$ $};
    \node[circle, draw] (5312) at (5,-4) {$1871$};
    \node[circle, draw] (5643) at (7,-4) {$2040$};
    \node[circle, draw] (1352) at (-4.75,-6) {$527$};
    \node[circle, draw] (1356) at (-3.25,-6) {$531$};

    \node[circle, draw] (401) at (1,-4) {$ $};
    \node[circle, draw] (4053) at (3,-4) {$1410$};
    \node[circle, draw] (4010) at (-1.25,-6) {$1379$};
    \node[circle, draw] (4012) at (0.25,-6) {$1381$};
    \node[circle, draw] (4013) at (1.75,-6) {$1382$};
    \node[circle, draw] (4016) at (3.25,-6) {$1385$};

    \draw[->] (root) -- node[above]{$0012$} (0012);
    \draw[->] (root) -- node[left]{$13$} (13);
    \draw[->] (root) -- node[right]{$2314$} (2314);
    \draw[->] (root) -- node[right]{$40$} (40);
    \draw[->] (root) -- node[right]{$5$} (5);
    \draw[->] (root) -- node[above]{$6431$} (6431);

    \draw[->] (13) -- node[left]{$10$} (1310);
    \draw[->] (13) -- node[left=-5pt]{$24$} (1324);
    \draw[->] (13) -- node[right]{$5$} (135);

    \draw[->] (135) -- node[left]{$2$} (1352);
    \draw[->] (135) -- node[right]{$6$} (1356);

    \draw[->] (40) -- node[left]{$1$} (401);
    \draw[->] (40) -- node[right]{$53$} (4053);
    \draw[->] (401) -- node[left]{$0$} (4010);
    \draw[->] (401) -- node[left=-5pt]{$2$} (4012);
    \draw[->] (401) -- node[right=-5pt]{$3$} (4013);
    \draw[->] (401) -- node[right]{$6$} (4016);

    \draw[->] (5) -- node[left]{$312$} (5312);
    \draw[->] (5) -- node[right]{$643$} (5643);
\end{tikzpicture}\caption{Example of the compacted trie $Z$ with word size $w = 7$ and $H = 4\ge\log_72^7\approx2.48$ for the set $X=\{9,497,508,527,531,844,1379,1381,1382,1385,1410,1871,2040,2276\}$. In the leaf nodes there are the values of $X$. The concatenation of the labels from the root to a leaf node is the representation in base $w$ of the value in that node.}
\label{fig:2}
\end{figure}

In the following, for brevity, when saying that edge $e$ (of trie $Z$) is labeled $x_i[l,r]$, we also mean that the string read from the root of $Z$ to the end of $e$ is $x_i[1,r]$. 

We identify one \emph{canonical position} $f(l,r)$ inside each edge labeled $x_i[l,r]$ of the path-compressed trie $Z$, as follows:

\begin{definition}
    The \emph{2-fattest number} $f(l,r)$ in a given interval $[l,r]$ ($1 \le l\le r \le w$) is the number  $f(l,r) \in [l,r]$ whose binary representation has the largest number of trailing zeros among all numbers in $[l,r]$.
\end{definition}

It is not hard to see that the 2-fattest number in any given interval is unique, because for any pair of integers $i<j$ whose binary representation has $k$ trailing zeros, there always exists a third integer $z$ with $i<z<j$ whose binary representation has $k+1$ zeros. 
Intuitively (as we will see later), this particular choice of canonical position has the additional advantage of enabling a binary search strategy on the trie's height.

Based on the above definition, we define the \emph{handle} of an edge as follows.

\begin{definition}[handle]
    Let $e$ be a trie edge labeled with $x_i[l,r]$. The \emph{handle} $h_e$ of $e$ is the string $h_e = x_i[1,f(l,r)]$.
\end{definition}

The next ingredient towards defining our data structure is a map storing information about children of nodes in $Z$. 

\begin{definition}
    Let $e$ be a trie edge labeled with $x_i[l,r]$. The (partial) function $\mathcal C_e : [0,w) \rightarrow [0,w)^*$ stores, for each $c \in [0,w)$ such that $x_i[1, r]\cdot c \in \mathit{pref}(X)$, the string $\mathcal C_e(c) = x_{i'}[1, r']$ prefixed by $x_i[1, r]\cdot c$ such that there exists a trie edge labeled $x_{i'}[r+1,r']$ for some $i',r'$. 
    We denote with $\mathcal C_\epsilon$ the additional special case where $x_i[1, r=0]=\epsilon$ is the empty string. 
\end{definition}

In other words, the domain of $\mathcal C_e$ corresponds to the characters that can extend edge $e$ in the trie. 
For each such character $c$, the map $\mathcal C_e$ stores the label $\mathcal C_e(c)$ of the path starting in the trie's root and ending at the end of edge $e'$ that is a child of edge $e$ and whose label starts with $c$. 
The particular case $\mathcal C_\epsilon$ corresponds to the trie's root: $\mathcal C_\epsilon(c)$ is the label of the edge exiting the root and whose label starts with character $c$.

Finally, our structure will associate the following additional information with the trie's edges.

\begin{definition}\label{def:precomputed}
    Let $x_i[l,r]$ be the label of an edge of $Z$. Then:
    \begin{itemize}
        \item $x^-_{i,r}$ is the strict predecessor in $X \cup \{0\}$ of the smallest $y\in X$ being prefixed by $x_i[1, r]$; 
        \item $x^+_{i,r} : [0,w)\cup\{\epsilon\} \rightarrow X$ is the (partial) function associating with every $c\in [0,w) \cup \{\epsilon\}$ the largest number $x^+_{i,r}(c) \in X$ that is prefixed by $x_i[1,r]\cdot c$ (if no such number exists, then the map is not defined on $c$).
    \end{itemize}
\end{definition}

The $w$-ary $z$-fast trie of $x_1, \dots, x_s$ is a function $\mathcal Z$ defined as follows. 

\begin{definition}[$w$-ary $z$-fast trie]\label{def:z-fast}
The $w$-ary $z$-fast trie of $x_1, \dots, x_s$ is the (partial) function $\mathcal Z$ defined as follows. For every edge $e$ of $Z$ labeled with $x_i[l,r]$ (for some $i,l,r$), we define $\mathcal Z(h_e) = (r, \mathcal C_e, x^-_{i,r}, x^+_{i,r})$, where $h_e = x_i[1,f(l,r)]$ is the handle of $e$.
\end{definition}

Note that $x_i[1, r] = x^+_{i,r}(\epsilon)[1,r]$, so below we take for granted that such a string can be retrieved in constant time from $\mathcal Z(h_e)$.

We implement $\mathcal Z$ using perfect hashing. 
Observe that membership in the domain of $\mathcal Z$ can be determined in $O(1)$ time since $\mathcal Z(h_e)$ can be used to obtain $x_i[1, r]$, which is an extension of $h_e$.
Functions $\mathcal C_e$ and $x^+_{i,r}$ are returned in $\mathcal Z(h_e)$ as pointers (so retrieving $\mathcal Z(h_e)$ takes constant time), and are implemented using a fusion tree with constant-time predecessor queries over their domains.

Function $\mathcal Z$ is defined on $O(s)$ distinct arguments $h_e$ (one for each trie's edge), hence integers $r$ and $x^-_{i,r}$ returned in $\mathcal Z(h_e) = (r, \mathcal C_e, x^-_{i,r}, x^+_{i,r})$ use overall $O(s)$ words of space. 
Similarly, $\mathcal C_e$ is defined on $n_e$ distinct arguments, where $n_e$ is the number of one-character right-extensions of $x_i[1,r]$ ($x_i[l,r]$ being the label of edge $e$) yielding an element of $\mathit{pref}(X)$; for each such argument $c$, $\mathcal C_e(c) \in [0,w)^{\le H}$ and therefore it can be packed in $O(1)$ machine words. Being also those right-extensions in bijection with the trie's edges, we conclude that all those functions $\mathcal C_e$ use $O(s)$ space overall. The same reasoning holds for functions $x^+_{i,j}$. Overall, the $w$-ary $z$-fast trie for $X$ uses $O(s)$ words of space. 

The structure of Definition \ref{def:z-fast} can be used to solve predecessor queries in $O(\log\log_w u)$ time ($u=w^H$). In the following subsection we prove a more general result that will allow us achieving distance-sensitive time $O(\log\log_w \Delta)$.

\subsection{Extending to obtain distance-sensitive times}

In order to support distance-sensitive predecessor queries, we augment the $z$-fast trie of Definition \ref{def:z-fast} with one additional data structure.

\begin{definition}\label{def:qk}
Given $q \in [0,w^H)$ and $k \in [0,\log H]$, we denote by $q_k$ the string $q[1,H]$ devoid of its suffix of length $2^k-1$, that is, $q_k = q[1,H-2^k+1]$. 
\end{definition}

Note that, in Definition \ref{def:qk}, it holds $k\in O(\log H) = O(\log\log_w u)$, where $u = w^H$ is the universe size, and $q_k$ is a prefix of $q$ that gets shorter as $k$ increases.
In particular, $q_0 = q$ and $q_{\log H} = q[1]$.
As said above, $q_k$ will be treated equivalently as an integer formed by the $H-2^k+1$ most significant digits of $q$ in base $w$ (this will allow us to perform integer subtractions on it).  

\begin{definition}\label{def:exponential map}
Let $\mathcal H : [0,w)^* \rightarrow [0,w)^*$ be the (partial) function defined as follows. For each $k \in [0,\log H]$ and each $q \in X$, $\mathcal H(q_k)=|h_e|$ is the length of the longest handle $h_e$ that is a prefix of $q_k$. If no such edge $e$ exists, then $\mathcal H(q_k)=0$.
\end{definition}

We store $\mathcal H$ with perfect hashing. Observe that $k$ in Definition \ref{def:exponential map} satisfies $k\in O(\log H) = O(\log\log_w 2^w) \subseteq O(\log w)$, therefore $\mathcal H$ is defined on $O(s\log w)$ distinct arguments. Similarly, $\mathcal H(q_k) = |h_e| \le H \in O(\log_w2^w)$, therefore it can be packed in $O(\log w)$ bits. We conclude that $\mathcal H$ can be stored in $O(s(\log w)^2) \subseteq O(sw)$ bits, or $O(s)$ words.

\begin{lemma}\label{lem:using map H}
    Functions $\mathcal Z$ and $\mathcal H$ take $O(s)$ words of space and allow determining in $O(1)$ time whether $p \in \mathit{pref}(X)$, for any $p \in [0,w)^{H-2^k+1}$ and $k \in [0,\log H]$. 
\end{lemma}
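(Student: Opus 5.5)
The space bound is already settled in the text: $\mathcal Z$ takes $O(s)$ words and $\mathcal H$ is stored in $O(s(\log w)^2)\subseteq O(sw)$ bits. So the whole task is the constant-time membership test for a given $p\in[0,w)^{H-2^k+1}$. The plan is to use $\mathcal H$ to guess the longest handle that is a prefix of $p$, use $\mathcal Z$ to recover the edge owning that handle, and finish with one arithmetic comparison.

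\textbf{Step 1: lookup in $\mathcal H$.} Query $\mathcal H(p)$ by perfect hashing. If $p$ is not in the domain of $\mathcal H$, then $p\neq q_k$ for every $q\in X$. Since $|p|=H-2^k+1$ is exactly the length of the strings $q_k$, and the prefix of length $H-2^k+1$ of any $q\in X$ is $q_k$, this means $p\notin\mathit{pref}(X)$. Answer ``no''.

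\textbf{Step 2: lookup in $\mathcal Z$.} Otherwise $p=q_k$ for some $q\in X$, so $p\in\mathit{pref}(X)$ and the answer is ``yes''. One might worry about hash collisions here: a value stored under key $p$ for a different string would be wrong. But perfect hashing stores the keys themselves (or they can be verified), so a false membership cannot occur.

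\textbf{Step 3: when verification is needed, and the main obstacle.} If the lemma is meant in a stronger form, where we also return the edge of $Z$ on which $p$ ends, or where $\mathcal H$ stores only values and keys are not verified, then we need a real check. Let $\ell=\mathcal H(p)$. If $\ell>0$, look up $\mathcal Z(p[1,\ell])$ to get $(r,\mathcal C_e,x^-,x^+)$ and hence the string $x_i[1,r]=x^+(\epsilon)[1,r]$.
\begin{itemize}
\item If $r\ge |p|$, accept exactly when $x_i[1,|p|]=p$, which is a single word comparison.
\item If $r<|p|$, then $p$ must continue below this edge. Apply $\mathcal C_e(p[r+1])$, using a fusion-tree lookup, to get the child edge's label $x_{i'}[1,r']$. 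That child edge's handle $x_{i'}[1,f(r+1,r')]$ is not a prefix of $p$, by the maximality of $\ell$. Hence $p$ must end on the child edge before its 2-fattest position, and we accept exactly when $x_{i'}[1,|p|]=p$.
\end{itemize}
The point to argue carefully is that maximality of $\ell$ together with uniqueness of the 2-fattest number forces $p$ to lie within one edge below $e$. This is the standard z-fast trie argument.

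All operations above are $O(1)$ hash lookups, fusion-tree queries and word operations, because each string involved fits in $O(1)$ words. This gives $O(1)$ time in total.
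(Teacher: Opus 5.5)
Your Steps 1--2 assume that the perfect hash for $\mathcal H$ can tell whether $p$ lies in its domain, i.e.\ that it stores (or can cheaply verify) its keys. That contradicts the space bound you quote. $\mathcal H$ has up to $\Theta(s\log w)$ arguments $q_k$, and most of them have at least $H/2$ base-$w$ digits, i.e.\ $\Theta(w)$ bits. Storing the keys can therefore cost $\Theta(s\log w)$ words; the $O(s(\log w)^2)$-bit count only pays for the $O(\log w)$-bit values. Outside its domain, $\mathcal H(p)$ is an arbitrary length, and domain membership is exactly what cannot be read off. If it could, the lemma would be trivial, since for $|p|=H-2^k+1$ we have $p\in\mathit{pref}(X)$ iff $p\in\mathrm{dom}(\mathcal H)$. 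So your Step 3 is not an optional strengthening but the proof itself. Its correctness rests on two facts you should state. There are no false negatives: $p\in\mathit{pref}(X)$ forces $p=q_k$ for some $q\in X$, so $\mathcal H(p)$ is then the true value and your maximality argument applies. There are no false positives: every acceptance compares $p$ with a genuine prefix $x_i[1,|p|]$ or $x_{i'}[1,|p|]$ retrieved from $\mathcal Z$ or $\mathcal C_e$. You must also reject immediately when $\mathcal H(p)>|p|$, since then $p[1,\ell]$ is undefined; this is how the paper detects a collision.

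The remaining gap is that Step 3 only treats $\ell>0$. The value $\mathcal H(p)=0$ does occur for $p\in\mathit{pref}(X)$. If $p$ ends on an edge leaving the root strictly before that edge's handle, of length $f(1,r)=2^{\lfloor\log r\rfloor}$, then no handle is a prefix of $p$; in Figure~\ref{fig:2}, take $p=001$ with $k=1$. The paper handles this as a separate case, in effect treating the root as an edge ending at depth $0$:
\begin{itemize}
\item fetch $x_i[1,r]=\mathcal C_\epsilon(p[1])$;
\item reject if $|p|>r$, since otherwise that root edge's handle would be a prefix of $p$, contradicting $\mathcal H(p)=0$ whenever $p\in\mathit{pref}(X)$;
\item otherwise accept iff $x_i[1,|p|]=p$.
\end{itemize}
With this case and the collision check added, your Step 3 coincides with the paper's proof. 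Note also that the maximality argument only needs each edge's handle to lie within that edge, not uniqueness of the 2-fattest number.
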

\begin{proof}
Let $p \in [0,w)^{H-2^k+1}$, with $k \in [0,\log H]$. We consider two cases.

(1) $\mathcal H(p)=0$. If $p\in \mathit{pref}(X)$, then $p$ must prefix the label $x_i[1,r]$ of one of the edges exiting from the trie's root, for some $i,r$. To check that this is the case, we get the label $x_i[1,r] = \mathcal C_\epsilon(p[1])$ of that (candidate) edge. 
If $|p|>r$ then $p\notin \mathit{pref}(X)$. Otherwise, $p\in \mathit{pref}(X)$ if and only if $x_i[1,|p|] = p$. 

(2) $\mathcal H(p)>0$. We know the length $|h_e| = \mathcal H(p)$ of the handle of some edge $e$ labeled with $x_i[l,r]$ maximizing $r$ and such that $h_e$ prefixes $p$ --- up to collisions in the perfect hash representing $\mathcal H$, which we now show how to detect. 
If $|h_e|>|p|$ then this is indeed a hash collision and $p\notin \mathit{pref}(X)$. Otherwise, we compute $h_e = p[1,|h_e|]$ in constant time and get $\mathcal Z(h_e) = (r, \mathcal C_e, x^-_{i,r}, x^+_{i,r})$, also in constant time. If $|p| \le r$, then we check in constant time that $p = x_i[1,|p|]$ (recall that $\mathcal Z(h_e)$ allows us retrieving $x_i[1,r]$); this test succeeds if and only if $p\in \mathit{pref}(X)$. Otherwise ($|p| > r$), let $c = p[r+1]$. We compute in constant time $\mathcal C_e(c) = x_{i'}[1, r']$, which at this point must be an extension of $p$ if $p$ does indeed belong to $\mathit{pref}(X)$. We check in constant time that $p = x_{i'}[1, |p|]$; this test succeeds if and only if $p\in \mathit{pref}(X)$.
\end{proof}

Running Lemma \ref{lem:using map H} for $k=0,1,2, \dots$ corresponds to running an exponential search for the longest prefix of $q$ matching an element in $\mathit{pref}(X)$, obtaining the following result. Searching also for $q_k-1$ will be needed later to solve predecessor queries.

\begin{corollary}[exponential search]\label{cor:expsearch}
    Let $X = \{x_1, \dots, x_s\} \subseteq [1,w^H)$.
    We can build a data structure on $X$ taking $O(s)$ words of space that, given any $q \in [1,w^H)$, returns the pair $(k,q')$ defined as follows. Letting $P_k=\{q_k,q_k-1\} \cap (\mathit{pref}(X)\cup \{0\})$, $k$ is the smallest 
    integer from $[0,\log H]$ such that $P_k \neq \emptyset$, and $q' = \max P_k$.
    If no such integer $k$ exists, it returns $(-1,-1)$.
    This query is answered in $O(1+k)$ time.
\end{corollary}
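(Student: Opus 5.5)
The data structure consists of $\mathcal Z$ and $\mathcal H$ together with the procedure of Lemma~\ref{lem:using map H}, which takes $O(s)$ words of space. Given $q$, I will run an exponential search over $k=0,1,2,\dots,\log H$. For each $k$, I compute $q_k$ and $q_k-1$ in constant time by extracting the $H-2^k+1$ most significant base-$w$ digits with a shift, and then subtracting one. For each candidate $p\in\{q_k,q_k-1\}$, I will test membership in $\mathit{pref}(X)\cup\{0\}$ in $O(1)$ time.

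The special value $0$ only arises when $q_k-1=0$ (or $q_k=0$), and I will handle it by a direct comparison. Lemma~\ref{lem:using map H} is stated for strings of length exactly $H-2^k+1$, which $q_k$ and $q_k-1$ are when viewed as fixed-length digit strings. The case $q_k=0$ with no decrement needs no special treatment: a borrow never occurs, and $q_k-1<0$ is simply excluded.

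The search stops at the first $k$ with $P_k\neq\emptyset$ and returns $(k,\max P_k)$. If every $k\le\log H$ fails, it returns $(-1,-1)$. Each round costs $O(1)$ and we perform $k+1$ rounds, so the total time is $O(1+k)$. Correctness is immediate from the definition of $P_k$, because we examine every $k'$ in increasing order and stop at the first nonempty one.

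The only delicate point is making sure Lemma~\ref{lem:using map H} really applies to $q_k-1$, since $\mathcal H$ was populated only on prefixes $x_k$ for $x\in X$, not on arbitrary strings. The guarantee I need is one-sided. If $p\in\mathit{pref}(X)$, then $p=x[1,H-2^k+1]=x_k$ for some $x\in X$, so $\mathcal H(p)$ is stored correctly. If $p\notin\mathit{pref}(X)$, the hash may return garbage, but the verification steps in the proof of the lemma reject it. I will check that the proof covers both $q_k$ and $q_k-1$, and that no other $p$ needs a correct $\mathcal H$-value.
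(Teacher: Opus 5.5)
Your proposal is correct and follows the paper's own argument: build only $\mathcal Z$ and $\mathcal H$, then run the $O(1)$ membership test of Lemma~\ref{lem:using map H} on $q_k$ and $q_k-1$ for $k=0,1,2,\dots$, handle the value $0$ by a direct comparison, and stop at the first nonempty $P_k$, for $O(1+k)$ total time. Your check that $\mathcal H$ is only needed correctly when $p\in\mathit{pref}(X)$ (where $p=x_k$ for some $x\in X$), and that the verification steps reject a spurious hash value otherwise, simply makes explicit why that lemma holds for every $p$ of length $H-2^k+1$, including $q_k-1$.
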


In Corollary \ref{cor:expsearch}, note that we allow $q_k$ (or $q_{k-1}$) to prefix also 0. This means that we stop looking for the smallest $k$ even if $q_k$ (or $q_{k-1}$) prefixes 0 but no other element in $X$. Additionally, observe that it is implicit that if $q_k=0$, then $q_k-1 = -1$ does not belong to $\mathit{pref}(X)\cup\{0\}$ (this avoids treating that particular case separately).  Observe that Corollary \ref{cor:expsearch} finds $k$ by linear search on $k$ (that is, $k=0,1,2,\dots$). This is enough to reach the desired running times for predecessor queries, as we will prove that $k \in O(\log\log_w\Delta)$.

\begin{lemma}[$w$-ary z-fast trie with hints]\label{lem:z-fast}
    Let $X = \{x_1, \dots, x_s\} \subseteq [1,w^H)$. 
    There exists a data structure on $X$ taking $O(s)$ words of space supporting the following query.
    Given $q \in [1,w^H)$ and the smallest $k \in [0,\log H]$ (we call $k$ the \emph{hint}) such that $q_k\in \mathit{pref}(X)$, find the predecessor $q^-$ of $q$ in $X\cup\{0\}$. This query is solved in $O(1+k)$ time. 
\end{lemma}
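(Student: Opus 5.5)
Given the hint $k$, we know $q_k\in\mathit{pref}(X)$ and (by minimality) $q_{k-1}\notin\mathit{pref}(X)$ when $k>0$. So the longest prefix of $q$ belonging to $\mathit{pref}(X)$ has length $L$ with $H-2^k+1\le L< H-2^{k-1}+1$; if $k=0$ then $q\in X$ itself and we return $q$. We therefore know $L$ up to an additive window of $2^{k-1}$ digits. The plan is to locate $L$ exactly, together with the trie edge $e$ of $Z$ on which the locus of $q[1,L]$ lies, by a fat binary search restricted to this window. We then answer the predecessor query from the data stored in $\mathcal Z(h_e)$ using $O(1)$ fusion-tree lookups. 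All structures used are $\mathcal Z$ and $\mathcal H$, which already take $O(s)$ words.

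\textbf{Fat binary search.} We maintain an interval $[a,b]$ with $q[1,a]\in\mathit{pref}(X)$ and $q[1,b]\notin\mathit{pref}(X)$, starting with $a=H-2^k+1$ and $b=H-2^{k-1}+1$. At each step we take $f=f(a+1,b-1)$, the 2-fattest number in the open interval, and query $\mathcal Z(q[1,f])$. If it is defined, we obtain an edge with handle $q[1,f]$ and its right end $r$. We then test whether $q[1,\min(r,b-1)]$ agrees with the label $x_i[1,r]$, which is recoverable from $x^+_{i,r}(\epsilon)$. This tells us whether we can set $a\gets\min(r,\ldots)$ or must set $b\gets$ the first mismatch, using a constant-time LCP of two word-packed strings. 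If $\mathcal Z(q[1,f])$ is undefined, no edge whose label range contains $f$ in $[a+1,b-1]$ extends $q$, so we set $b\gets f$.

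The key property is the standard z-fast trie invariant: the edge $e$ containing depth $L$ spans $[l,r]\ni L$, and its handle sits at $f(l,r)$. The 2-fattest number of the search interval, if it falls in $[l,r]$, equals $f(l,r)$ whenever $[l,r]$ contains it, which is exactly the argument of Belazzougui et al.\ adapted from bits to base-$w$ digits. Each step halves the window, so there are $O(\log 2^{k})=O(k)$ steps, each costing $O(1)$ via perfect hashing.

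\textbf{Answering from the exit edge.} Once we have the edge $e$ labelled $x_i[l,r]$ with $l\le L\le r$, we distinguish two cases. If $L<r$, then $q$ leaves the trie inside $e$. If $q[L+1]>x_i[L+1]$, the predecessor is $x^+_{i,r}(\epsilon)$, the maximum under $e$. Otherwise it is $x^-_{i,r}$. If $L=r$, then $q$ leaves at the node below $e$. We use the fusion tree of $\mathcal C_e$ (or of $x^+_{i,r}$) to find the largest $c<q[r+1]$ in the domain, returning $x^+_{i,r}(c)$, or $x^-_{i,r}$ if no such $c$ exists. The root case uses $\mathcal C_\epsilon$ with $\mathcal H$/$\mathcal Z$ returning nothing. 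All of these operations take $O(1)$ time.

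The main obstacle is the correctness of the fat binary search restricted to the window $[H-2^k+1,\,H-2^{k-1}+1]$ rather than to $[0,H]$. In particular, we must show that the handle consulted in each step is the handle of the unique edge crossing the current midpoint, and that the false positives from $\mathcal Z$ (a handle matching but belonging to an edge not on $q$'s path) are filtered out by comparing the full stored label. I would first try to prove that invariant directly from the uniqueness of 2-fattest numbers, mirroring the binary z-fast trie proof.
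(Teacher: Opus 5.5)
Your overall structure matches the paper's: an $O(k)$-probe fat binary search confined to the window $(H-2^k,H)$, followed by an $O(1)$ answer read off $\mathcal Z(h_e)$. The search as you set it up is not correct, however, and it fails exactly at the step you flag as the main obstacle.

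The property you invoke is that if the 2-fattest number $f$ of the current window lies in the range $[l,r]$ of an edge, then $f=f(l,r)$. This holds only when $[l,r]\subseteq[a+1,b-1]$. The edge of $q$'s path through depth $f$ can extend past $b$, or below $a$ (your initial $a=H-2^k+1$ need not be a node depth). In that case its handle is a fatter number outside the window. So an undefined $\mathcal Z(q[1,f])$ does not imply $q[1,f]\notin\mathit{pref}(X)$, and the update $b\gets f$ breaks your invariant.

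Concretely, take $H=16$ and $k=3$, so $a=9$ and $b=13$. Let $q$'s path reach a branching node at depth $9$ and continue on a leaf edge with range $[10,16]$ (handle at $16$) ending at some $x\in X$, with $q[1,12]=x[1,12]$ and $q[13]>x[13]$. Then $k=3$ is indeed the hint. Both probes ($f=12$, then $f=10$) fail, and your search stops with $L=9$. The $L=r$ branch then skips the child $q[10]$ and returns something strictly smaller than $x$, although the true predecessor is $x$. Since your answering step relies on the exact $L$ and the exact exit edge, the error reaches the output.

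The missing idea is that the binary search should not be used to determine $L$. As in the paper, it should locate the longest handle that prefixes $q$. You then retrieve the full label $x_i[1,r]$ of that handle's edge from $\mathcal Z$ and compare it with $q[1,r]$ as packed integers in $O(1)$ time. The three outcomes (smaller, larger, equal, the last resolved via the fusion tree of $\mathcal C_e$) give the answer without ever computing $L$.

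To make this rigorous, keep $a$ at node depths: on success set $a\gets r$ after checking $q[1,r]=x_i[1,r]$, and note that a mismatch before $r$ already locates the exit. Maintain $a\le p^*<b$, where $p^*$ is the depth of the deepest node of $Z$ on $q$'s path at depth at most $L$. A failed probe at $f$ means one of two things:
\begin{itemize}
\item $f>L$, or
\item the path edge through $f$ starts after $a$ and reaches depth at least $b$, since its handle is fatter than $f$ and hence cannot lie in $[a+1,b-1]$.
\end{itemize}
In both cases $p^*<f$, so setting $b\gets f$ is safe.

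At termination $a=p^*$. One further packed comparison with $\mathcal C_e(q[a+1])$, when $q[a+1]$ is in the domain of $\mathcal C_e$, finishes the query. This extra descent is needed because the longest handle prefixing $q$ may belong to the parent of the exit edge.

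Finally, you should start from a node depth near $H-2^k+1$ rather than from the root, since the root would cost $O(\log H)$ probes. Use the edge of the longest handle prefixing $q_k$, which $\mathcal H(q_k)$ provides.
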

\begin{proof}
    We know that there exists an integer $q'\in X$ prefixed by $q_k = q[1,H-2^k+1]$. If $k=0$ then $q = q_k  \in X$, hence we just return $q$ and we are done, so in the following we assume $k>0$.

    We show how to use classic fat binary search on the z-fast trie to find the longest handle of length in $[H-2^k+2, H-1]$ that is a prefix of $q$ (note that we can exclude length $H$ since $k>0$) in the claimed running time. A first observation is that (since $H$ is excluded from that interval) the binary representation of any number in $[H-2^k+2, H-1]$ has no more than $k$ trailing zeros. This enables finding the answer by $O(1+k)$ steps of fat binary search as follows. We check, using map $\mathcal Z$, whether $q[1,f(H-2^k+2, H-1)]$ is a handle. If the answer is positive, then we recurse in interval $[f(H-2^k+2, H-1)+1,H-1]$. Otherwise, we recurse in interval $[H-2^k+2, f(H-2^k+2, H-1)-1]$. Since each recursive step removes the 2-fattest number from the interval, it also decreases the number of trailing zeros of the 2-fattest number in the interval by one unit. We conclude that this procedure finds the longest handle $y$ prefixing $q$ in $O(1+k)$ time.

    Let $\ell = lcp(q,X) \in [0,w)^{<H}$ denote the longest prefix of $q$ belonging to $\mathit{pref}(X)$.
    From the handle $y$ found in the previous step, using map $\mathcal Z$ one can get in constant time (1) the label $x_i[1, r]$ ending at the end of trie edge $e$ labeled with $x_i[l,r]$ (for some $i,l,r$) such that $l \le |\ell| \le r$ and $\ell = x_i[1, |\ell|]$, as well as (2) dictionary $\mathcal C_e$, (3) value $x^-_{i,r}$ and (4) function $x^+_{i,r}$.
    In other words, by reading $\ell$ from the trie's root we end up inside edge $e$. We distinguish three cases. 
    
    (1) If $x_i[1, r]<q[1, r]$, then $q^-$ is the largest number in $X$ being prefixed by $x_i[1, r]$, that is, $x^+_{i,r}(\epsilon)$.
    
    (2) If $x_i[1, r]>q[1, r]$, then $q^- = x_{i,r}^-$.
    
    (3) Finally, it could be $x_i[1, r]=q[1,r]$, meaning that $\ell$ matches until the end of edge $e$. In that case, let $c = q[r+1]$ be the mismatching character (note that $c$ is well-defined since $r = |\ell| < H$). 
    We find the predecessor $c^-$ of $c$ among the characters in the domain of $\mathcal C_e$, that is, characters extending edge $e = x_i[l,r]$ in the trie. 
    This operation takes constant time, since $\mathcal C_e$ is represented with a fusion tree.
    If such $c^-$ exists, then $q^- = x^+_{i,r}(c^-)$. Otherwise, $c$ is smaller than all those characters and $q^-$ is again  $x_{i,r}^-$ as in case (2).
 \end{proof}

\subsection{Putting everything together}

Given $q \in [1,w^H)$, we apply Corollary \ref{cor:expsearch} and find, in time $O(1+k)$, the smallest value $k$ such that either $q_k$ or $q_k-1$ prefixes some integer in $X\cup\{0\}$, while (if $k>0$) both $q_{k-1}$ and $q_{k-1}-1$ do not. We prove that such a value $k$ is distance-sensitive.

\begin{lemma}\label{lem:order of k}
Let $k$ be the value identified by Corollary \ref{cor:expsearch}. 
Then, 
$k\in O(\log\log_w\Delta)$.     
\end{lemma}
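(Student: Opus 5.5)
We show that if $k\ge 1$ is returned, then $\Delta$ must be at least roughly $w^{2^{k-2}}$, which gives $2^{k-2}\le \log_w\Delta+O(1)$ and hence $k\in O(\log\log_w\Delta)$. The cases $k\le 1$ are trivial, so assume $k\ge 2$. The idea is that $q_{k-1}$ is $q$ truncated by $2^{k-1}-1$ base-$w$ digits. Corollary~\ref{cor:expsearch} guarantees that neither $q_{k-1}$ nor $q_{k-1}-1$ belongs to $\mathit{pref}(X)\cup\{0\}$. We argue that this forces the predecessor $q^-$ to be far from $q$.

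The key step is to interpret these prefix conditions as interval conditions on integers. The elements of $[0,w^H)$ prefixed by $q_{k-1}$ form the block
\[
I=[\,q_{k-1}\cdot w^{m},\,(q_{k-1}+1)\cdot w^{m}),\qquad m=2^{k-1}-1 .
\]
The elements prefixed by $q_{k-1}-1$ form the block immediately to its left, $I'=[(q_{k-1}-1)w^m,\,q_{k-1}w^m)$. Since $q_{k-1}\notin\mathit{pref}(X)$, no element of $X$ lies in $I$. Since $q_{k-1}-1\notin\mathit{pref}(X)$, none lies in $I'$ either. The value $0$ must also be handled: the condition $q_{k-1}-1\ne 0$ and $q_{k-1}\ne 0$ as prefixes of $0$ means neither block contains $0$, i.e. $q_{k-1}\ge 2$. (This is exactly why the corollary also tests membership of the prefixes of $0$.) Therefore $X\cup\{0\}$ has no element in $I'\cup I$. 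Since $q\in I$, we get $q^-<(q_{k-1}-1)w^m$ and $q\ge q_{k-1}w^m$, so $\Delta=q-q^->w^m=w^{2^{k-1}-1}$.

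From $\Delta>w^{2^{k-1}-1}$ we obtain $\log_w\Delta> 2^{k-1}-1$, so $k\le \log(\log_w\Delta+1)+1=O(\log\log_w\Delta)$, recalling the convention $\log_d x=\max(1,\log_d x)$. This gives Lemma~\ref{lem:order of k}.

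The main obstacle is the careful handling of off-by-one details and boundary cases. First, the prefix $q_{k-1}-1$ must be a legal string of the same length; it fails when $q_{k-1}=0$, which is excluded by the convention that $-1\notin\mathit{pref}(X)\cup\{0\}$ and by $0$ itself being treated as a prefix-bearing element. Second, one must check that "$q_{k-1}$ prefixes $0$" is correctly read as "$0\in I$". I would verify these by writing both $q_{k-1}$ and $0$ as $H$-digit strings and confirming that the membership tests in Corollary~\ref{cor:expsearch} coincide exactly with emptiness of $(X\cup\{0\})\cap(I'\cup I)$.
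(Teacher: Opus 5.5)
Your proof is correct and follows the paper's argument essentially step for step. The paper also uses minimality of $k$ to get $P_{k-1}=\emptyset$, hence $q_{k-1}-1>0$. It observes that the two adjacent blocks of integers prefixed by $q_{k-1}-1$ and by $q_{k-1}$ contain no element of $X\cup\{0\}$, while $q$ lies in the right block. It concludes $\Delta>w^{2^{k-1}-1}$, i.e., $k<1+\log(1+\log_w\Delta)$. Your restriction to $k\ge 2$ is unnecessary, since the same argument works verbatim for $k\ge 1$, but it is harmless.
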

\begin{proof}
If $k=0$ then the claim is trivially true, so we assume $k>0$.
Observe that by the definition of $k$ it must be $P_{k-1}=\{q_{k-1},q_{k-1}-1\} \cap (\mathit{pref}(X)\cup \{0\}) = \emptyset$, hence in particular $q_{k-1} > 0$ and $q_{k-1}-1 \neq 0$ hold. This allows us to conclude that $q_{k-1}-1 > 0$.

Integers prefixed by either $q_{k-1}$ or $q_{k-1}-1$ form a contiguous integer range $R = R_1\cup R_2$ where $R_1 = [(q_{k-1}-1)\cdot w^{2^{k-1}-1},q_{k-1}\cdot w^{2^{k-1}-1})$ and $R_2 = [q_{k-1}\cdot w^{2^{k-1}-1},(q_{k-1}+1)\cdot w^{2^{k-1}-1})$ are disjoint: $R_1\cap R_2 = \emptyset$. 
Since no element in $X\cup\{0\}$ is prefixed by neither $q_{k-1}$ nor $q_{k-1}-1$, we have that (A) $R \cap (X\cup\{0\}) = \emptyset$. 
Moreover, since $q$ is prefixed by $q_{k-1}$, then (B) $q \in R_2$.  
Observations (A) and (B) imply that $q^-$ comes strictly before the left bound of $R_1$, that is, $q^- < (q_{k-1}-1)\cdot w^{2^{k-1}-1}$.
Then, $\Delta = q - q^- > |R_1| = w^{2^{k-1}-1}$. This implies $k < 1 + \log(1+\log_w\Delta) \in O(\log\log_w \Delta)$.
\end{proof}

Let $k$ be the value identified by Corollary \ref{cor:expsearch}. We distinguish two cases. 

(1) If $q_k-1$ prefixes some integer in $X\cup\{0\}$ but $q_k$ does not (in particular, $q_k-1 \ge 0$), then $q^-$ is the largest integer in $X\cup \{0\}$ being prefixed by $q_k-1$. We find such a value as follows.
We check if $q_k-1 \in \mathit{pref}(X)$ using 
Lemma \ref{lem:using map H}.
If $q_k-1$ does not prefix any integer in $X$, then $q^-=0$ and we are done. Otherwise, 
using maps $\mathcal H$ and $\mathcal Z$ we identify in constant time the edge $e$ labeled with $x_i[l,r]$ such that $x_i[1,r]$ is prefixed by $q_k-1$ but $x_i[1,l-1]$ is not, as well as its associated map $x^+_{i,r}$. Then, $q^- = x^+_{i,r}(\epsilon)$.

(2) Otherwise, $q_k$ prefixes some integer in $X \cup \{0\}$.
If $q_k \notin \mathit{pref}(X)$ (a property we can check using Lemma \ref{lem:using map H}), then $q^- = 0$. Otherwise, we can apply Lemma \ref{lem:z-fast} on query $q$ with hint $k$ and find $q^-$ in   $O(1+k)$ time. 

Overall, finding $q^-$ takes $O(1+k)$ time. By Lemma \ref{lem:order of k}, this is $O(1+k) = O(\log\log_w\Delta)$.
This proves Theorem \ref{thm:distance-sensitive}.

\section{Figure for Lemma~\ref{lem:leaves2}}
\label{app:fig_for_leaves2}

\begin{figure}[h!]
\center
\resizebox{.83\textwidth}{!}
{\begin{tikzpicture}[scale=0.9,decoration=snake,
   line around/.style={decoration={pre length=#1,post length=#1}},
   minimum size=20pt,inner sep=0pt, outer sep=0pt]
    \node[circle, draw] (u) at (0,0) {$u$};
    \node (dotsl) at (-2.5,-1.5) {$\dots$};
    \node (dotsr) at (2.5,-1.5) {$\dots$};

    \node[circle, draw] (ul) at (-5,-1.5) {$u^l$};
    \node[circle, draw] (v1) at (-5.5,-5) {$v_1$};
    \node[circle, draw] (v1_lth_leaf) at (-5.5,-7) {$p_l$};
    \node[circle, draw] (v1_right) at (-4,-7) {$ $};
    \node (ul_left_limit) at (-7,-7) {};

    \node[circle, draw] (wk) at (0,-1.5) {$w_k$};
    \node[circle, draw] (vk_1) at (0,-3) {};
    \node[minimum size=10pt] (vk_dots) at (0,-4) {$\dots$};
    \node[circle, draw] (vk) at (0,-5) {$v_k$};
    \node[circle, draw] (vk_left) at (-2,-7) {$p_{k_i}$};
    \node[circle, draw] (vk_middle) at (0,-7) {$p_{k_j}$};
    \node[circle, draw] (vk_right) at (2,-7) {$p_{k_t}$};

    \node[circle, draw] (ur) at (5,-1.5) {$u^r$};
    \node[circle, draw] (vd') at (5.5,-5) {$v_{d'}$};
    \node[circle, draw] (vd'_rth_leaf) at (5.5,-7) {$p_r$};
    \node[circle, draw] (vd'_left) at (4,-7) {$ $};

    \draw[->] (u) --  (ul);
    \draw[->,decorate,line around=0.5cm]  (ul) -- ++(v1);
    \draw[->,decorate,line around=0.1cm]  (v1) -- ++(v1_lth_leaf);
    \draw[->,decorate,line around=0.3cm]  (v1) -- ++(v1_right);
    \draw[-, dashed] (ul) -- (-7.9,-7.3);
    \draw[-, dashed] (ul) -- (-3.35,-7.3);
    \draw[-, dashed] (v1) -- (-7,-7.3);

    \draw[->] (u) --  (wk);
    \draw[->]  (wk) -- (vk_1);
    \draw[-]  (vk_1) -- (vk_dots);
    \draw[->]  (vk_dots) -- (vk);
    \draw[->,decorate,line around=0.5cm]  (vk) -- ++(vk_left);
    \draw[->,decorate,line around=0.1cm]  (vk) -- ++(vk_middle);
    \draw[->,decorate,line around=0.5cm]  (vk) -- ++(vk_right);
    \draw[-, dashed] (wk) -- (-2.7,-7.3);
    \draw[-, dashed] (wk) -- (2.7,-7.3);
    \draw[-, dashed] (u) -- (-7.9,-1);
    \draw[-, dashed] (u) -- (7.9,-1);

    \draw[->] (u) --  (ur);
    \draw[->,decorate,line around=0.5cm]  (ur) -- ++(vd');
    \draw[->,decorate,line around=0.1cm]  (vd') -- ++(vd'_rth_leaf);
    \draw[->,decorate,line around=0.3cm]  (vd') -- ++(vd'_left);
    \draw[-, dashed] (ur) -- (7.9,-7.3);
    \draw[-, dashed] (ur) -- (3.35,-7.3);
    \draw[-, dashed] (vd') -- (7,-7.3);
\end{tikzpicture}}
\caption{Subtrees of the trie $T$ of Lemma \ref{lem:subintervals2} with roots $v_1,\dots, v_{d'}$ at depth $\log_d \frac{n}{x_r + x_{r + 1} - x_l  - x_{l  + 1}}$, containing the $l$th through $r$th leaves $p_l, \dots, p_r$ of $T$ (as explained in the proof of Lemma \ref{lem:leaves2}). }\label{fig:trie_T}
\end{figure}

\section{Proof of Lemma~\ref{lem:main2}}
\label{app:main2}

\begin{proof}
For each subinterval $[y_i, z_i)$, we store the index $j'$ of the smallest $x_{j'} \ge y_i$ and the index $j''$ of the largest $x_{j''} < z_i$.  This way, if $[x_j, x_{j + 1})$ is not completely contained, or it is the only range contained, in $[y_i, z_i)$ then when given $i$ and $q$ we can find $j$ with $O(1)$ integer comparisons and no accesses to $T$.

Otherwise, let $l<r$ be such that $[x_l,x_{l+1}),\ldots,[x_r,x_{r+1})$ are the ranges fully contained in $[y_i,z_i)$. We store, associated with every interval $[y_i,z_i)$, the three nodes $v_1, v_{d'}$, and $u$ of $T$ identified in Lemma~\ref{lem:leaves2} for the values $l$ and $r$. 
We additionally store the two children $v'$ and $v''$ of $u$ that are ancestors of $v_1$ and $v_{d'}$, respectively (those were called $u^l$ and $u^r$ in the proof of Lemma~\ref{lem:leaves2}). 
Further, for each node $u$ of $T$ with at least two children $u_1, \dots, u_{d''}$ we build a fusion tree; for every such child $u_p$,
the fusion tree stores the integer $x_a$ such that the leftmost leaf below $u_p$ is associated with interval $[x_a,x_{a+1})$. 
Because $T$ has $s$ leaves and the fusion trees are built on the nodes with at least two children, those fusion trees use $O(s)$ space in total.

Given an integer $q \in [y_i,z_i)$ and the node $u$ associated with $[y_i,z_i)$, the fusion tree of $u$ returns in $O(\log_w d)$ time the child $u_p$ of $u$ such that the $j$th leaf (i.e., the one such that $q \in [x_j,x_j+1)$) is below $u_p$. If $u_p$ is equal to $v'$ or $v''$, we replace it with $v_1$ or $v_{d'}$, respectively. Otherwise, 
we replace $u_p$ with the deepest descendant of $u_p$ being an ancestor of all the leaves below $u_p$; such substitution can be performed in constant time by explicitly associating such a node to each $u_p$ in $T$ (again, using $O(s)$ space overall).
At this point, observe that $u_p$ is precisely the node among $v_1, \dots, v_{d'}$ identified by Lemma~\ref{lem:leaves2} such that (i) the $j$th leaf is below $u_p$ and (ii) the depth of $u_p$ in $T$ is at least $\log_d \frac{n}{x_r + x_{r + 1} - x_l - x_{l + 1}}$.

Since $u_p$ has depth at least 
\begin{eqnarray*}
\lefteqn{\log_d \frac{n}{x_r + x_{r + 1} - x_l - x_{l + 1}}} \\
& = & \log_d \frac{n}{(x_r + x_{r + 1}) / 2 - (x_l + x_{l + 1}) / 2} - \log_d2 \\
& \geq & \log_d \frac{n}{z_i - y_i} - 1
\end{eqnarray*}
and $x_j$ is at depth
\[\left\lfloor \log_d \frac{n}{x_{j + 1} - x_j} \right\rfloor\, + 2,\]
we can find the $j$th leaf of $T$ by starting from $u_p$, traversing at most 
\[\log_d \frac{n}{x_{j + 1} - x_j} - \log_d \frac{n}{z_i - y_i} + 3
~~=~~  \log_d \frac{z_i - y_i}{x_{j + 1} - x_j} + O(1)\,\]
edges of $T$.
\end{proof}

\section{Proof of Lemma~\ref{lem:alpha bidirectional}}
\label{app:alpha bidirectional}

\begin{proof}

First, we build the same string attractor $\Gamma$ of \cite{KP18} --- of size $O(b)$ --- on the input parse: for every phrase $S[i\dd j]$, we insert $i$ and $j$ in $\Gamma$. Then, we additionally insert in $\Gamma$ integers $1$, $n$, and $k\cdot \lceil n/b \rceil$ in $\Gamma$ for every $k= 1, 2, \dots b-1$. As a result, $\Gamma$ is a string attractor for $S$ of size $O(b)$ with the property that pairs of consecutive string attractor positions are within $O(n/b)$ positions from each other. Let $\Gamma = \{x_1 < \cdots < x_p\}$ be the string attractor obtained in this way.

We use a technique similar to the concentric exponential parsing \cite[Thm 3.12]{KP18} and build a bidirectional parse for $S$ as follows. Repeat the following construction for every pair of consecutive positions $x_i < x_{i+1}$ in $\Gamma$. Let $m = (x_i + x_{i+1})/2$. We make $S[x_i]$, $S[x_{i+1}]$, and $S[m]$ explicit phrases. Then, we parse $S[x_i+1 \dd m-1]$ left-to-right creating one (non-explicit) phrase of length $\alpha^1$, followed by one (non-explicit) phrase of length $\alpha^2$, and so on as long as the current phrase we are building is completely contained inside $S[x_i+1 \dd m-1]$. 
    We repeat a symmetric procedure parsing right-to-left the substring $S[m+1 \dd x_{i+1}]$. 
    These two procedures may leave a region spanning $S[m]$ not being covered by any phrase; we make that region a single phrase (observe that the length of this phrase is not necessarily a power of $\alpha$). 
    The above procedures create in total $O(b \log_\alpha(n/b))$ phrases. 

    Phrases' sources are chosen as follows. Observe that, by construction, each phrase in the parse we just built is strictly contained in a phrase of the original parse. Given a phrase $S[i\dd j]$ in the parse we just built, we follow its source $S[i'\dd j'] = S[i\dd j]$ in the original parse. If $S[i'\dd j']$ spans an element of $\Gamma$ then we stop; 
    otherwise, we recursively repeat following sources until this becomes true (by definition of bidirectional parse, this procedure cannot loop forever). 
    Let $S[i''\dd j''] = S[i\dd j]$ be the substring (containing an element of $\Gamma$) found in this way. We assign $S[i''\dd j'']$ as source of $S[i\dd j]$. Observe that this source assignment is such that: (1) the height $h'_q$ of any $S[q]$ in the new parse is no larger than that of the original parse: $h'_q \le h_q$, and (2) the resulting parse is indeed a valid bidirectional parse (i.e., no loops of referencing chains are introduced). Both properties follow from the fact that we possibly just ``short-cut'' referencing chains of the original parse, making them (possibly) shorter.
    
    We now show that the source of every non-explicit phrase of length in $[\alpha^t,\alpha^{t+1})$ overlaps only phrases of length at most $\alpha^{t+1}$, implying that the parse is $\alpha$-contracting.
    To see this, let $S[i \dd i+d-1]$ be such a source ($d\in [\alpha^t,\alpha^{t+1})$), and let $j \in [i \dd i+d-1] \cap \Gamma$ be any attractor element crossing it. Consider the source's suffix $S[j \dd i+d-1]$ (a symmetric argument holds for the prefix $S[i \dd j-1]$), and let $\lambda = i+d-j \in [\alpha^t,\alpha^{t+1})$ be its length. 
    By construction of our parse, the leftmost phrase of length at least $\alpha^{t+2}$ in $S[j \dd n]$ cannot start before $S[j+k]$, where $k = \sum_{z=0}^{t+1} \alpha^z > \alpha^{t+1}$. We conclude that $S[j \dd i+d-1]$ ends before $S[j+k]$, hence it can only overlap phrases of length at most $\alpha^{t+1}$. Our claim follows.
\end{proof}

\end{document}